\definecolor{lcolor}{RGB}{0,84,159}
\definecolor{fcolor}{RGB}{161,16,53}
\newcommand{\R}{\mathbb{R}}
\newcommand{\Q}{\mathbb{Q}}
\newcommand{\Z}{\mathbb{Z}}
\newcommand{\N}{\mathbb{N}}
\newcommand{\Rgeq}{\R_{\geq 0}}
\newcommand{\Qgeq}{\Q_{\geq 0}}
\newcommand{\Zgeq}{\Z_{\geq 0}}
\newcommand{\set}[1]{\{ #1 \}}
\newcommand{\fromto}[2]{\set{#1, \ldots, #2}}
\newcommand{\powerset}[1]{2^{#1}}
\DeclareMathOperator*{\argmax}{arg\,max}
\DeclareMathOperator*{\argmin}{arg\,min}
\DeclareMathOperator{\poly}{poly}
\DeclareMathOperator{\conv}{conv}
\DeclareMathOperator{\cone}{cone}
\newcommand{\U}{\mathcal{U}}
\newcommand{\I}{\mathcal{I}}
\newcommand{\F}{\mathcal{F}}
\newcommand{\sol}{\mathcal{S}}
\newcommand{\PTIME}{\textsf{P}}
\newcommand{\NP}{\textsf{NP}}
\newcommand{\SSPNP}{\textsf{SSP-NP}}
\newcommand{\SSPNPc}{\textsf{SSP-NPc}}
\newcommand{\LOP}{\textsf{LOP}}
\newcommand{\leqSSP}{\leq_\text{SSP}}
\newcommand{\sat}{\textsc{Sat}}
\begin{document}

\title{The Complexity of Stackelberg Pricing Games}

\author{
  Christoph Grüne\inst{1}\orcidlink{0000-0002-7789-8870} \and
  Dorothee Henke\inst{2}\orcidlink{0000-0001-9190-642X} \and
  Eva Rotenberg\inst{3}\orcidlink{0000-0001-5853-7909} \and
  Lasse Wulf\inst{3}\orcidlink{0000-0001-7139-4092}
}

\authorrunning{C.\ Grüne et al.}

\institute{
  RWTH Aachen, Germany. \email{gruene@algo.rwth-aachen.de} \and
  University of Passau, Germany. \email{dorothee.henke@uni-passau.de} \and
  IT University of Copenhagen, Denmark. \email{erot@itu.dk} and \email{lasw@itu.dk}
}

\maketitle

\begin{abstract}
We consider Stackelberg pricing games, which are also known as bilevel pricing problems, or combinatorial price-setting problems.
This family of problems consists of games between two players: the leader and the follower.
There is a market that is partitioned into two parts: the part of the leader and the part of the leader's competitors.
The leader controls one part of the market and can freely set the prices for products.
By contrast, the prices of the competitors' products are fixed and known in advance. 
The follower, then, needs to solve a combinatorial optimization problem in order to satisfy their own demands, while comparing the leader's offers to the offers of the competitors.
Therefore, the leader has to hit the intricate balance of making an attractive offer to the follower, while at the same time ensuring that their own profit is maximized.

Pferschy, Nicosia, Pacifici, and Schauer considered the Stackelberg pricing game where the follower solves a knapsack problem.
They raised the question whether this problem is complete for the second level of the polynomial hierarchy, i.e., $\Sigma^p_2$-complete. 
The same conjecture was also made by Böhnlein, Schaudt, and Schauer.
In this paper, we positively settle this conjecture.
Moreover, we show that this result holds actually in a much broader context:
The Stackelberg pricing game is $\Sigma^p_2$-complete for over 50 $\NP$-complete problems, including most classics such as TSP, vertex cover, clique, subset sum, etc.
This result falls in line of recent meta-theorems about higher complexity in the polynomial hierarchy by Grüne and Wulf. 

\keywords{bilevel pricing problems, Stackelberg games, bilevel optimization, polynomial hierarchy}
\end{abstract}

\section{Introduction}
\label{sec:introduction}

\emph{Stackelberg pricing games} are economic games between two players, usually called the leader and the follower.
In these games, we assume that the leader owns one part of the relevant market and may set prices of products to their liking.
The rest of the market belongs to competitors of the leader and has fixed prices, which we assume to be known in advance.
After the leader has set the prices, the follower solves a combinatorial optimization problem, freely combining products of the leader and the competitors. 
The leader therefore has to balance both players' interests carefully.
On the one hand, if the leader sets the prices too low, the leader will not make a large profit. 
On the other hand, if they set the prices too high, the follower will simply switch to the competitors, and the leader will not receive any profit.
Stackelberg pricing games were introduced by Labb{\'e}, Marcotte and Savard, in order to model optimal highway road tolls \cite{labbe1998bilevel}. 
In this context, the follower solves a shortest-path problem. 
Early works have studied $\NP$-hardness, approximation algorithms,  hardness of approximation, or multiple followers for the highway toll pricing problem \cite{briest2012stackelberg,bui2024asymmetry,bui2022catalog,joret2011stackelberg,DBLP:journals/networks/RochSM05}.
Later works have extended this to many other settings where the follower solves different combinatorial optimization problems.
This includes the case of minimum spanning tree (Cardinal et al. \cite{cardinal2011stackelberg}, Labbé et al. \cite{labbe2021computational}),  
knapsack (Briest et al. \cite{briest2012b}, Pferschy et al. \cite{pferschy2021stackelberg}, Bui et al. \cite{bui2025solving}),
set cover (Briest et al. \cite{briest2012b}, Bui et al. \cite{bui2025solving}),
independent set (Böhnlein et al. \cite{bohnlein2023stackelberg}, Bui et al. \cite{bui2025solving}), or the
bipartite vertex cover problem (Briest at al. \cite{briest2012stackelberg}).
A general overview of network pricing problems can be found in the articles of van Hoesel \cite{hoesel2008overview} and of Labbé and Violin \cite{DBLP:journals/anor/LabbeV16}.

\paragraph{Example.} As one possible example of a Stackelberg pricing game, let us mention the
\emph{Stackelberg knapsack game with profit selection}
as defined by Pferschy, Nicosia, Pacifici, and Schauer \cite{pferschy2021stackelberg}. We are given a set $\U$ of items and a partition $\U = \U_L \cup \U_F$ into leader's and follower's items (with $\U_L \cap \U_F = \emptyset$). 
Just like in the standard knapsack problem, each item $e \in \U$ has a weight $w(e)$. 
Together with some given weight threshold $W \geq 0$, this determines the set of \emph{feasible knapsack solutions} by $\F = \set{X \subseteq \U_L \cup \U_F : w(X) \leq W}$. 
Each item has a profit, specified separately for the leader by $p_L : \U_L \to \Zgeq$ and for the follower by $p_F : \U_F \to \Zgeq$. 
The idea is now that, for all items $e \in \U_L$, the leader receives their profit $p_L(e)$ only if the follower actually buys item $e$. In order to achieve this, the leader is allowed to \enquote{tempt} the follower, by offering so-called \emph{incentives}~$i(e)$ for each $e \in\U_L$ to the follower.
The leader then wants to solve
\begin{equation} \label{eq:knapsack-pricing}
  \begin{aligned}
    \max_{i \colon \U_L \to \R}\ & p_L(X^\star) - i(X^\star)\\
    \text{s.t. } & X^\star \in \argmax_{X \in \F}\ i(X) + p_F(X). \\
  \end{aligned}
\end{equation}
(For the sake of readability, we write $p_L(X^\star)$, $i(X^\star)$, and $p_F(X)$ instead of $p_L(X^\star \cap \U_L)$, $i(X^\star \cap \U_L)$, and $p_F(X \cap \U_F)$.)
The leader therefore has to consider the tradeoff between offering the follower good incentives $i(e)$ and still receiving a large payout $p_L(X^\star) - i(X^\star)$.
Even though we described this problem from the viewpoint of giving incentives, we will later see in \cref{sec:problem-definition} that an easy reformulation can equivalently frame the problem in terms of pricing. Therefore, as has also been observed by past researchers \cite{bohnlein2023stackelberg,bui2025solving}, problem~\eqref{eq:knapsack-pricing} fits into the framework of combinatorial pricing games.

\paragraph{The polynomial hierarchy.}
Many problems in bilevel optimization are expected to be even harder than $\NP$-hard -- they are naturally complete for the second level of the \emph{polynomial hierarchy}, the so-called complexity class $\Sigma^p_2$. 
If a problem is $\Sigma^p_2$-complete, then, under complexity-theoretic assumptions, it cannot be expressed as a mixed-integer linear formulation with only polynomially many variables and constraints. 
Therefore, the study and understanding of $\Sigma^p_2$-complete problems is important. 
We refer to Woeginger \cite{DBLP:journals/4or/Woeginger21} for an excellent introduction to $\Sigma^p_2$-completeness in bilevel optimization.
Past researchers have recognized that $\Sigma^p_2$-complexity likely applies to bilevel pricing problems. Indeed, Pferschy, Nicosia, Pacifici, and Schauer \cite{pferschy2021stackelberg} considered the knapsack pricing problem~\eqref{eq:knapsack-pricing}. 
They were able to show $\NP$-hardness as well as several other interesting approximability and inapproximability results for different variants of the follower's behavior. 
They could show that the variant of problem~\eqref{eq:knapsack-pricing}
where the leader can only choose between two different incentives per item
is $\Sigma^p_2$-hard. However, in the general case,
they could not resolve the complexity status and stated it as an open question whether the problem is $\Sigma^p_2$-complete.
Later, Böhnlein, Schaudt, and Schauer \cite{bohnlein2023stackelberg} considered the combinatorial pricing problem for multiple underlying problems. They obtained a polynomial-time algorithm for interval scheduling pricing and an inapproximaility result for independent set pricing. 
Furthermore, they showed that there exists some artificial, hand-crafted problem $\Pi$ such that $\Pi$ is in $\NP$ and the pricing problem for $\Pi$ is $\Sigma^p_2$-complete. Naturally, they remarked that such a result would be much more interesting for a more natural problem, and conjectured that both the knapsack pricing problem, and the set packing pricing problem (a generalization of independent set) are $\Sigma^p_2$-complete.

\paragraph{Our contribution.}
In this paper, we positively settle the open question and conjecture
from \cite{bohnlein2023stackelberg,pferschy2021stackelberg}
by proving that the knapsack pricing problem~\eqref{eq:knapsack-pricing} and the independent set pricing problem are $\Sigma^p_2$-complete.
Moreover, we are able to extend our techniques beyond a proof of the initial conjecture in order to prove a \emph{much more general} result: We show that the combinatorial pricing problem is in fact $\Sigma^p_2$-complete for \emph{every of the following problems}:
\begin{quote}
    knapsack,
    independent set,
    clique,
    vertex cover,
    set cover,
    hitting set,
    feedback vertex set,
    dominating set,
    feedback arc set,
    uncapacitated facility location,
    $p$-center,
    $p$-median,
    TSP,
    Steiner tree,
    3-dimensional matching,
    two-disjoint directed path,
    satisfiability,
    subset sum,
    Hamiltonian cycle,
\end{quote}

as well as many more problems (but at least 50) on an ever growing list of so-called \emph{SSP-complete} problems.
This is a subclass of $\NP$-complete problems introduced by Grüne and Wulf \cite{grune2025completeness,ReductionsNetwork}. 
The class contains most classically known $\NP$-complete problems, and is defined as all $\NP$-complete problems that satisfy a certain extra condition, more precisely explained in \cref{sec:framework}. 
This extra condition is on the one hand  general enough so that it seems to be satisfied by most \enquote{natural} $\NP$-complete problems, on the other hand it is restrictive enough to enable broad meta-theorems about $\Sigma^p_2$-completeness.
In fact, our main theorem states that, for every $\SSPNP$-complete problem $\Pi$, its pricing variant is $\Sigma^p_2$-complete.
Our result complements recent other meta-theorems about completeness in the polynomial hierarchy of multi-level optimization problems, regarding min-cost interdiction \cite{grune2025completeness}, min-max regret \cite{grune2025completeness}, unit cost interdiction \cite{grune2025unitcostinterdict}, and recoverable optimization \cite{grune2024recoverable}.

We remark that, for each individual problem from the list, it is not surprising that the pricing variant is $\Sigma^p_2$-complete. However, we argue that the main insight that our paper offers is that all these pricing problems are $\Sigma^p_2$-complete \enquote{for the same reason}.

\paragraph{Outline.} We quickly sketch the remainder of the paper, and our general proof strategy to obtain our meta-theorem. In \cref{sec:framework}, we summarize the main ideas of the SSP framework of Grüne and Wulf. 
In \cref{sec:problem-definition}, we define the pricing problem associated to an abstract combinatorial problem $\Pi$.
This definition needs to be extremely precise, since we later wish to argue over all such problems simultaneously. 
In \cref{sec:SAT-complete}, we first consider the pricing problem for only $\Pi = \textsc{Satisfiability}$ (\sat{}). One could argue that the pricing problem in case of \sat{} seems unnatural. 
However, the advantage of \sat{} is that, heuristically, it often seems well-suited to perform hardness proofs. Indeed, we can leverage this to show $\Sigma^p_2$-completeness of the \sat{} pricing problem in \cref{sec:SAT-complete}.
Despite the amenable properties of \sat{}, the proof is still involved.
Finally, in \cref{sec:meta-reduction}, we show how to make use of the strong structural insights of the SSP framework.
We show that the $\Sigma^p_2$-completeness of the \sat{} pricing problem extends to \emph{all} other problems in the class $\SSPNPc$.
This results in our main theorems, \cref{thm:meta-reduction-max,thm:meta-reduction-min,thm:meta-reduction-feas}.
We remark that our proof follows a natural approach: Instead of proving the $\Sigma^p_2$-completeness of the knapsack pricing problem directly, we first consider the case of \sat{}, which is easier to handle (at least according to the judgment of the authors of this article), and then extend our techniques from \sat{} to all other problems.

\section{Preliminaries}
\label{sec:preliminaries}

A \emph{language} is a set $L\subseteq \{0,1\}^*$.
A language $L$ is contained in $\Sigma^p_2$ if and only if there exists some polynomial-time computable function $V$ (also called \emph{verifier}) such that, for all $w \in \set{0,1}^*$ and suitable $m_1,m_2 = |w|^{O(1)}$,
$$
    w \in L \ \Leftrightarrow \ \exists y_1 \in \set{0,1}^{m_1} \ \forall y_2 \in \set{0,1}^{m_2}: V(w,y_1,y_2) = 1.
$$
An introduction to the polynomial hierarchy and the class $\Sigma^p_2$ can be found in the book by Papadimitriou \cite{DBLP:books/daglib/0072413} or in the article by Jeroslow \cite{DBLP:journals/mp/Jeroslow85}.
An introduction specifically in the context of bilevel optimization can be found in the article of Woeginger \cite{DBLP:journals/4or/Woeginger21}.

A \emph{many-one-reduction} or \emph{Karp-reduction} from a language $L$ to a language~$L'$ is a map $f : \{0,1\}^* \to \{0,1\}^*$ such that $w \in L$ if and only if $f(w) \in L'$ for all $w \in \{0,1\}^*$. 
A language $L$ is $\Sigma^p_2$-hard if every $L' \in \Sigma^p_2$ can be reduced to $L$ with a polynomial-time many-one reduction. If $L$ is both $\Sigma^p_2$-hard and contained in $\Sigma^p_2$, it is $\Sigma^p_2$-complete.

A \emph{Boolean variable} $x$ is a variable that takes one of the values 0 or 1. Let $X = \fromto{x_1}{x_n}$ be a set of variables. The corresponding \emph{literal set} is given by $L = \fromto{x_1}{x_n} \cup \fromto{\overline x_1}{\overline x_n}$. A \emph{clause} is a disjunction of literals.
A Boolean formula is in \emph{conjunctive normal form (CNF)} if it is a conjunction of clauses. It is in \emph{disjunctive normal form (DNF)} if it is a disjunction of conjunctions of literals.
In this paper, we use the notation where a clause is represented by a subset of the literals, and a CNF formula $\varphi$ is represented by a set of clauses. 
For example, the set $\set{\set{x_1, \overline x_2},\set{x_2, x_3}}$ corresponds to the formula $(x_1 \lor \overline x_2)\land (x_2 \lor x_3)$.
We write $\varphi(X)$ to indicate that formula $\varphi$ depends only on $X$.
Sometimes we are interested in cases where the variables are partitioned, we indicate this case by writing $\varphi(X,Y,\ldots)$.
An \emph{assignment} of variables is a map $\alpha : X \to \{0,1\}$.
The evaluation of the formula $\varphi$ under assignment $\alpha$ is denoted by $\varphi(\alpha) \in \{0,1\}$.
Given a formula $\varphi(X, Y)$ with partitioned variables and a partial assignment $\alpha$ on $X$, we also denote the remaining formula by $\varphi(\alpha, Y)$.

For some cost function $c : U \to \R$, and some subset $U' \subseteq U$, we define the cost of the subset $U'$ as $c(U') := \sum_{u \in U'} c(u)$. For a map $f : A \to B$ and some subset $A' \subseteq A$, we define the image of the subset $A'$ as $f(A') = \set{f(a) : a \in A'}$. 
\section{SSP Framework}
\label{sec:framework}

We repeat the most important concepts from \cite{grune2025completeness} necessary to understand this paper.

\begin{definition}[Linear Optimization Problem, from \cite{grune2025completeness}]
\label{def:LOSPP}
    A linear optimization problem (or short LOP problem)  $\Pi$ is a tuple $(\I, \U, \F, w, t)$ such that
    \begin{itemize}
        \item $\I \subseteq \{0,1\}^*$ is a language. We call $\I$ the set of instances of $\Pi$.
        \item To each instance $I \in \I$, there is some
        \begin{itemize}
            \item set $\U(I)$ called the universe associated to the instance $I$.
            \item set $\F(I) \subseteq 2^{\U(I)}$ called the feasible solution set associated to instance $I$. 
            \item function $w^{(I)}: \U(I) \rightarrow \Z$ mapping each universe element to its weight.
            \item threshold $t^{(I)} \in \Z$. 
        \end{itemize}
    \end{itemize}
\end{definition}

For $I \in \I$, we define the solution set $\sol(I) := \set{S \in \F(I) : w^{(I)}(S) \leq t^{(I)}}$ as the set of feasible solutions below the weight threshold. 
The instance $I$ is a yes-instance if and only if $\sol(I) \neq \emptyset$.
We assume that $|\U(I)| \leq \poly(|I|)$ and that it can be checked in polynomial time in $|I|$ whether some proposed set $F \subseteq \U(I)$ is feasible. The LOP problem is called a \emph{minimization LOP problem} if $w^{(I)},t^{(I)} \geq 0$ for all instances $I \in \I$. 
It is called a \emph{maximization LOP problem} if $w^{(I)}, t^{(I)} \leq 0$ for all instances $I \in \I$.
The following is an example of a minimization LOP problem (in this example, the function $w^{(I)}$ is always the unit weight function).

\vspace{.5\baselineskip}
\noindent\textsc{Vertex Cover}
\vspace{-0.67\baselineskip}
\begin{description}
\item{\it Instances:} Graph $G = (V, E)$, number $k \in \N$.
\item{\it Universe:} Vertex set $V =: \U$.
\item{\it Feasible solution set:} The set of all vertex covers of $G$.
\item{\it Solution set:} The set of all vertex covers of $G$ of size at most $k$.
\end{description}

It turns out that oftentimes the mathematical discussion is a lot clearer when one omits the concepts $\F(I)$, $w^{(I)}$, and $t^{(I)}$ since, for the abstract proofs of the theorems, only $\I$, $\U$, and $\sol$ are important.
This leads to the following abstraction from the concept of an LOP problem:

\begin{definition}[Subset Search Problem (SSP), from \cite{grune2025completeness}]
\label{def:SSP}
A subset search problem (or short SSP problem) $\Pi$ is a tuple $(\I, \U, \sol)$ such that
\begin{itemize}
    \item $\I \subseteq \set{0,1}^*$ is a language. We call $\I$ the set of instances of $\Pi$. 
    \item To each instance $I \in \I$, there is some set $\U(I)$, which we call the universe associated to the instance $I$. 
    \item To each instance $I \in \I$, there is some (potentially empty) set $\sol(I)\subseteq \powerset{\U(I)}$, which we call the solution set associated to the instance $I$.
\end{itemize}
\end{definition}

An instance of an SSP problem is a yes-instance if $\sol(I) \neq \emptyset$. 
Every LOP problem becomes an SSP problem with the definition $\sol(I) := \set{S \in \F(I) : w^{(I)}(S) \leq t^{(I)}}$.
We call this the \emph{SSP problem derived from an LOP problem}. Some problems are more naturally modeled as an SSP problem to begin with, rather than as an LOP problem.
For example, the satisfiability problem becomes an SSP problem with the following definition.

\vspace{.5\baselineskip}
\noindent\textsc{Satisfiability}
\vspace{-.67\baselineskip}
  \begin{description}
    \item{\it Instances:} Literal set $L = \fromto{\ell_1}{\ell_n} \cup \fromto{\overline \ell_1}{\overline \ell_n}$, clause set $C = \fromto{C_1}{C_m}$ such that $C_j \subseteq L$ for all $j \in \fromto{1}{m}.$
    \item{\it Universe:} $L =: \U$.
    \item{\it Solution set:} The set of all subsets $L' \subseteq \U$ of the literals such that $|L' \cap \set{\ell_i, \overline \ell_i}| = 1$ for all $i \in \fromto{1}{n}$ and $|L' \cap C_j| \geq 1$ for all clauses~$C_j \in C$.
  \end{description}

Grüne and Wulf introduce a new type of reduction, called \emph{SSP reduction}. 
Due to space reasons, we refer the reader to \cite{grune2025completeness} for an extensive explanation and examples of this concept.

\begin{definition}[SSP Reduction, from \cite{grune2025completeness}]
\label{def:ssp-reduction}
    Let $\Pi = (\I,\U,\sol)$ and $\Pi' = (\I',\U',\sol')$ be two SSP problems. We say that there is an SSP reduction from $\Pi$ to $\Pi'$, and write $\Pi \leqSSP \Pi'$, if
    \begin{itemize}
        \item There exists a function $g : \I \to \I'$, computable in polynomial time in the input size $|I|$, such that $I$ is a yes-instance if and only if $g(I)$ is a yes-instance (i.e., $\sol(I) \neq \emptyset$ if and only if $\sol'(g(I)) \neq \emptyset$).
        \item There exist functions $(f_I)_{I \in \I}$, computable in polynomial time in $|I|$, such that, for all instances $I \in \I$, we have that $f_I : \U(I) \to \U'(g(I))$ is an injective function mapping from the universe of the instance $I$ to the universe of the instance $g(I)$ such that 
        \begin{equation*}
            \set{f_I(S) : S \in \sol(I) } = \set{S' \cap f_I(\U(I)) : S' \in  \sol'(g(I))}.
        \end{equation*}

    \end{itemize}
\end{definition}

The class of $\SSPNP$-complete problems is denoted by $\SSPNPc$ and consists of all SSP problems $\Pi$ that are polynomial-time verifiable and such that $\textsc{Satisfiability} \leq_\text{SSP} \Pi$.
The main observation in~\cite{grune2025completeness} is that many classic problems are contained in the class $\SSPNPc$, 
and that this fact can be used to prove that their corresponding min-max versions are $\Sigma^p_2$-complete.
Finally, an LOP problem $\Pi$ is called \emph{$\LOP$-complete} if the SSP problem derived from it is $\SSPNP$-complete, 
and the SSP reduction used to achieve this has the additional property that $\set{F \in \F(I) : w^{(I)}(F) \leq t^{(I)} - 1} = \emptyset$, where $I$, $w^{(I)}$, and $t^{(I)}$ are the instance, the weights, and the threshold used in the reduction.

\paragraph{Example.} Garey and Johnson \cite{gareyjohnson1979} show a reduction from \sat{} to vertex cover. Given a \sat{} formula $\varphi$ with $n$ variables and $m$ clauses, they construct a graph $G_\varphi$ such that $G_\varphi$ has a vertex cover of size $n +2m$ if and only if $\varphi$ is satisfiable.
The graph $G_\varphi$ contains a subset $V'$ of $2n$ vertices that \enquote{encode} the \sat{} problem: Given a vertex cover of size $n + 2m$, 
if we restrict it to $V'$, we directly obtain a satisfying assignment. On the other hand, given a satisfying assignment, we can easily translate it to $V'$ and complete this partial vertex cover to a vertex cover of size $n+2m$. 
Finally, $G_\varphi$ has no vertex covers of size less than $n+2m =: t^{(I)}.$ This proves that the vertex cover problem is $\LOP$-complete.

\section{Problem Definition}
\label{sec:problem-definition}

Our problem definition is slightly different for minimization and maximization problems.
First, let $\Pi = (\I, \U, \F, w, t)$ be a maximization LOP problem as defined in the previous section. 
We now formally define the problem $\textsc{Pricing-$\Pi$}$. 
The input is a tuple $(I, \U_L, \U_F, p_L, p_F)$ such that $I \in \I$, $\U(I) = \U_L \cup \U_F$, $\U_L \cap \U_F = \emptyset$, $p_L : \U_L \to \Zgeq$, and $p_F : \U_F \to \Zgeq$. The problem is to compute the value of
\begin{equation} \label{eq:pricing-pi-max-1}
  \begin{aligned}
    \max_{i \colon \U_L \to \R}\ & p_L(X^\star) - i(X^\star)\\
    \text{s.t. } & X^\star \in \argmax_{X \in \F(I)}\ i(X) + p_F(X). \\
  \end{aligned}
\end{equation}

Since we are concerned with computational complexity, we want to treat the problem as a decision problem. We therefore assume that we are given an additional threshold $T_\text{pr} \in \Qgeq$, and the question is whether the value of \eqref{eq:pricing-pi-max-1} is at least $T_\text{pr}$.
The problem $\textsc{Pricing-$\Pi$}$ is then the set of all such decision problems for all possible tuples $(I, \U_L, \U_F, p_L, p_F)$ with properties as specified above, and all possible thresholds $T_\text{pr} \in \Qgeq$. 

As it is often done in bilevel optimization, we follow the optimistic assumption: 
If the follower can choose between different solutions $X$ that have the same maximal follower's objective value $i(X)+ p_F(X)$, we assume optimistically that they will select the one with the best objective value $p_L(X) - i(X)$ for the leader.

In the remainder of the paper, it will be more convenient to instead consider an equivalent reformulation of \eqref{eq:pricing-pi-max-1}. 
Let us define for all items $e \in \U_L$ the term $d(e) := p_L(e) - i(e)$, and let $p(e) := p_L(e)$ for $e \in \U_L$ and $p(e) := p_F(e)$ for $e \in \U_F$.
Then, since the function $p_L : \U_L \to \Zgeq$ is a fixed function that is part of the input, we have that optimizing over $i : \U_L \to \R$ is equivalent to optimizing over $d : \U_L \to \R$. Therefore, \eqref{eq:pricing-pi-max-1} is equivalent to

\begin{equation} \label{eq:pricing-pi-max-2}\tag{$\text{P}_{\text{max}}$}
  \begin{aligned}
    \max_{d \colon \U_L \to \R}\ & d(X^\star)\\
    \text{s.t. } & X^\star \in \argmax_{X \in \F(I)}\ p(X) - d(X). \\
  \end{aligned}
\end{equation}
Observe that \eqref{eq:pricing-pi-max-2} can be understood as a problem where the follower has a gross profit $p(e) \geq 0$ for every item $e \in \U_L \cup \U_F$, and is aiming to maximize their profit.
The leader can place prices on the items in $\U_L$, which increases the leader's profit, while decreasing the follower's profit. 
Note that, if $X \cap \U_L \neq \emptyset$ for all $X \in \F(I)$, then the leader can raise the price to infinity since the follower has to include such leader's items into a feasible solution.
Therefore, \eqref{eq:pricing-pi-max-2} is unbounded. 
In the remainder of the paper, we assume that this is not the case, that is, we make sure that there is always at least one feasible solution $F \subseteq \U_F$.

Now, let $\Pi = (\I, \U, \F, w, t)$ be a minimization LOP problem as described in the previous section.
Each instance of $\Pi$ can be described as $\min_{X \in \F(I)}w(X) = \max_{X \in \F(I)}-w(X)$ for some weight function $w \geq 0$. It is therefore natural to define the  problem $\textsc{Pricing-$\Pi$}$ in this case as 
\begin{equation} \label{eq:pricing-pi-min}\tag{$\text{P}_{\text{min}}$}
  \begin{aligned}
    \max_{d \colon \U_L \to \R}\ & d(X^\star)\\
    \text{s.t. } & X^\star \in \argmin_{X \in \F(I)}\ c(X) + d(X). \\
  \end{aligned}
\end{equation}
for a function $c : \U(I) \to \Z_{\geq 0}$. 
Problem \eqref{eq:pricing-pi-min} can be interpreted as a pricing problem, where the follower tries to pay minimal cost $c(X) + d(X)$ 
and the leader can increase the prices for their part of the universe, trying to make attractive offers to the follower while still aiming for a high profit.

\paragraph{Difference between the domains $\R$ and $\Rgeq$.} 
In both problems \eqref{eq:pricing-pi-max-2} and~\eqref{eq:pricing-pi-min}, the leader is unconstrained, i.e., they can select any $d : \U_L \to \R$. 
Other natural choices, which have also been considered by past authors \cite{bui2025solving,pferschy2021stackelberg}, are to only allow $d(e) \geq 0$ or $d(e) \leq p(e)$ (which is equivalent to $i(e) \geq 0$ in \eqref{eq:pricing-pi-max-1}) for all $e \in \U_L$.
For our exposition, it is most convenient to consider $d : \U_L \to \R$. 
However, we later show that this does not make a difference for our main result, i.e., all the other natural choices lead to $\Sigma^p_2$-complete problems as~well. 

\paragraph{Example.} If $\Pi = \text{knapsack}$, then $\Pi$ is a maximization problem, and our problem $\textsc{Pricing-$\Pi$}$ is given by $\eqref{eq:pricing-pi-max-2}$, 
where $p : \U(I) \to \Z_{\geq 0}$ are the profits and $\F(I)$ is the family of sets of items of total weight below the weight threshold. By our explanations, this is equivalent to the Stackelberg knapsack problem~\eqref{eq:knapsack-pricing} from~\cite{pferschy2021stackelberg}.
If $\Pi = \text{vertex cover}$, then $\Pi$ is a minimization problem, and $\textsc{Pricing-$\Pi$}$ is given by \eqref{eq:pricing-pi-min}, 
where $c : \U(I) \to \Zgeq$ are the vertex weights and $\F(I)$ is the family of all vertex covers of the input graph.

\paragraph{Feasibility pricing.}
Finally, we consider problems such as $\Pi =$ \textsc{Satis\-fiab\-il\-ity}, which are neither maximization nor minimization problems, as they are concerned only with feasibility. 
As explained in the introduction, we wish to prove our meta-theorem starting from problems of this kind. Let $\Pi = (\I, \U, \sol)$ be an SSP problem as described in the previous section. Given a tuple $(I, \U_L, \U_F, p)$ with $p :\U \to \Z_{\geq 0}$, we define the problem \textsc{Feas-Pricing-$\Pi$} as the decision problem corresponding to
\begin{equation} \label{eq:pricing-pi-feas}\tag{$\text{P}_{\text{feas}}$}
  \begin{aligned}
    \max_{d \colon \U_L \to \R}\ & d(X^\star)\\
    \text{s.t. } & X^\star \in \argmax_{X \in \sol(I)}\ p(X) - d(X). \\
  \end{aligned}
\end{equation}

\begin{theorem}
  For all LOP problems $\Pi$, we have $\textsc{Pricing-$\Pi$} \in \Sigma^p_2$.
  Also for all $\SSPNP$ problems $\Pi$, we have $\textsc{Feas-Pricing-$\Pi$} \in \Sigma^p_2$.
\end{theorem}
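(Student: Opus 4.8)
The plan is to cast each of the three variants \eqref{eq:pricing-pi-max-2}, \eqref{eq:pricing-pi-min}, and \eqref{eq:pricing-pi-feas} into the canonical $\exists\forall$ form defining $\Sigma^p_2$. Consider first a maximization LOP problem $\Pi$, so that an instance of $\textsc{Pricing-$\Pi$}$ is a tuple $(I,\U_L,\U_F,p_L,p_F)$ together with a threshold $T_{\text{pr}} \in \Qgeq$, and the question is whether the value of \eqref{eq:pricing-pi-max-2} is at least $T_{\text{pr}}$. Under the optimistic convention, this holds if and only if there exist a price vector $d \colon \U_L \to \R$ and a set $X^\star \subseteq \U(I)$ such that $X^\star \in \F(I)$, such that $d(X^\star) \ge T_{\text{pr}}$, and such that $p(X) - d(X) \le p(X^\star) - d(X^\star)$ holds for every $X \in \F(I)$; the last condition expresses exactly that $X^\star$ is an optimal follower response under the prices $d$. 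Provided $d$ and $X^\star$ can be guessed with only $\poly(|I|)$ bits, this is a $\Sigma^p_2$ statement: the existential block is the pair $(d,X^\star)$; the universal block ranges over subsets $X \subseteq \U(I)$, which is a string of $\poly(|I|)$ bits because $|\U(I)| \le \poly(|I|)$; and the verifier checks, in polynomial time, that $X^\star \in \F(I)$ (possible by the definition of an LOP problem), that $d(X^\star) \ge T_{\text{pr}}$, and that $X \notin \F(I)$ or $p(X) - d(X) \le p(X^\star) - d(X^\star)$. The minimization problem \eqref{eq:pricing-pi-min} is handled in the same way after replacing $p$ by $c$ and the follower's inequality by $c(X) + d(X) \ge c(X^\star) + d(X^\star)$, and the feasibility problem \eqref{eq:pricing-pi-feas} after replacing $\F(I)$ by $\sol(I)$; in the last case membership in $\sol(I)$ is decidable in polynomial time since $\Pi \in \SSPNP$, and if one prefers to view such membership as being witnessed by a certificate, one simply guesses the certificate for $X^\star$ in the existential block and, in the universal block, quantifies over $X$ together with a claimed certificate, the verifier imposing the follower-optimality inequality only when that certificate is valid.

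The only nontrivial ingredient is to show that a witnessing $d$ may always be taken to be a vector of rationals of polynomially bounded bit-length. The plan here is to \emph{freeze the follower's response and re-optimize the prices}. Suppose the value of \eqref{eq:pricing-pi-max-2} is at least $T_{\text{pr}}$; fix any price vector $d^0$ achieving this together with (by optimism) an optimal follower response $X^\star$ with $d^0(X^\star) \ge T_{\text{pr}}$, and consider the linear program in the variables $d(e)$, $e \in \U_L$, that maximizes $d(X^\star)$ subject to the constraints $d(X^\star) - d(X) \le p(X^\star) - p(X)$ for all $X \in \F(I)$. In every such constraint the coefficient of each variable lies in $\set{-1,0,1}$ and the right-hand side is an integer of absolute value at most $\sum_{e \in \U(I)} p(e)$, hence of polynomial bit-length. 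This LP is feasible, since $d^0$ satisfies every constraint (this is precisely the statement that $X^\star$ is an optimal follower response under $d^0$) and attains objective value at least $T_{\text{pr}}$. It is also bounded from above: by the standing assumption there is a feasible $F \in \F(I)$ with $F \subseteq \U_F$, so that $d(F) = 0$, and the constraint for $F$ forces $d(X^\star) \le p(X^\star) - p(F) \le p(X^\star)$. Invoking the standard bound on the bit-length of optimal solutions of a rational linear program -- which yields an optimal solution whose entries are rationals of bit-length polynomial in the number of variables and in the largest bit-length of a coefficient occurring in a single constraint -- we obtain an optimal solution $d^1$ of polynomial bit-length, and then $(d^1, X^\star)$ is a witness of the required form. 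The same reasoning carries over verbatim to \eqref{eq:pricing-pi-min} and \eqref{eq:pricing-pi-feas}.

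I expect this bit-length bound on $d$ to be the only real obstacle. The point is to recognize that, once $X^\star$ is fixed, the admissible prices form a polyhedron cut out by inequalities with $\set{-1,0,1}$ coefficients and small right-hand sides, and that the leader's objective is bounded over this polyhedron precisely because of the guaranteed all-follower-items feasible solution; everything else -- the shape of the $\exists\forall$ sentence, and polynomial-time verifiability of feasibility and of the arithmetic comparisons -- is routine.
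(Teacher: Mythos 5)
Your proposal is correct and matches the paper's proof in essence: both fix the follower's optimal response $X^\star$ in the existential block and reduce the bit-length question for $d$ to the polyhedral fact that vertex complexity is polynomially bounded by facet complexity, which survives even when the polyhedron has exponentially many defining inequalities. The paper adds the constraint $d(X^\star)\ge T_{\text{pr}}$ to the polyhedron and takes any generator $a_1$ of its $\conv$--$\cone$ decomposition rather than re-optimizing $d(X^\star)$ as an LP and appealing to boundedness, but this is a cosmetic variation on the same argument.
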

\begin{proof}
	In order to prove the containment in $\Sigma^p_2$, we would like to find a witness $\exists y_1 \forall y_2$ of polynomial size that can be verified by a polynomial-time computable function~$V$ with input $y_1$, $y_2$ and the instance of the problem.
	Consider w.l.o.g. only problem~\eqref{eq:pricing-pi-max-2}.
	The instance of the corresponding decision problem \textsc{Pricing-$\Pi$} $= (I, \U_L, \U_F, p, T_\text{pr})$ consists of an instance $I$ as given in the underlying problem~$\Pi$, leader elements $\U_L \subseteq \U(I)$, follower elements $\U_F \subseteq \U(I)$, a profit function $p: \U \rightarrow \Zgeq$, and a threshold $T_\text{pr} \in \Qgeq$.
	Let $n := |\U(I)|$ and $n_L := |\U_L|$.
	A possible witness-verifier pair to this problem can be defined as follows:
	\begin{equation}\label{sigma2-containment-formulation}
		\begin{aligned}
			&\exists d : \U_L \to \R, \exists X^\star \in \F(I) : \forall X \in \F(I) : \\
			&\qquad\qquad \left[p(X^\star) - d(X^\star) \geq p(X) - d(X)\right] \land d(X^\star) \geq T_\text{pr}.
		\end{aligned}
              \end{equation}
	Since $d : \U_L \to \R$ involves real numbers, we have to worry about their encoding length.
	Thus,  it is not entirely obvious why the witness is of polynomial size and can be checked efficiently.
	With the following polyhedral arguments, however, we can show that $d$ can be encoded in polynomial size.
	For this, let $x_i \in \set{0,1}^n$ be the characteristic vector of the $i$-th feasible solution $X_i \in \F(I)$.
	Since $\Pi \in \SSPNP$,
        each $x_i$ is a
        string of length $n = |\U(I)|$. However, there may be exponentially many such $x_i$'s.
	Then, \eqref{sigma2-containment-formulation} is true if and only if the following nonlinear mixed-integer program contains a feasible point.
	\begin{equation} \label{eq:mip-formulation}
		\begin{aligned}
                        \min~ & 0 \\
			\text{s.t. }
			& d \in \R^{n_L} \\
			& x^\star \in \F(I)\\
			& p^tx^\star - (d,0)^tx^\star \geq p^tx_i - (d,0)^tx_i & \forall x_i \in \F(I)\\
                            & (d,0)^tx^\star \geq T_{\text{pr}}
		\end{aligned}
	\end{equation}	
	Note that, in this formulation, we let the leader choose the characteristic vector~$x^\star$ to the feasible solution $X^\star \in \F(I)$ that maximizes the follower's objective.
	This is possible due to the optimistic assumption.
	Furthermore, the constraint \enquote{$x^\star \in \F(I)$} can in fact be expressed as a MIP constraint because we have assumed $\Pi$ to be a problem in $\SSPNP$.
	More precisely, the feasibility of $\Pi$ can be checked in polynomial time,
	and this Turing machine can be translated into polynomially many MIP constraints (by the Cook-Levin theorem and a corresponding reduction from {\sc Satisfiability} to MIP).
	
	If we have a yes-instance to the problem \textsc{Pricing-$\Pi$}, there is a feasible solution $(d,x^\star)$ to \eqref{eq:mip-formulation}.	
	Then, $x^\star$ encodes an optimal follower's solution $X^\star \in \F(I)$ corresponding to the leader's solution $d$.
	Since $\exists X^\star \in \F(I)$ is on the same quantifier level as $\exists d : \U_L \to \R$, we can fix the characteristic vector $x^\star$.
	Then, let $P_{x^\star}$ denote the resulting polyhedron associated to the linear program where we fix $x^\star$ and vary $d \in \R^{n_L}$ in \eqref{eq:mip-formulation}.
	We remark that this linear program may have exponentially many constraints and is not necessarily bounded. However, $P_{x^\star}$ is nonempty since $d \in P_{x^\star}$.
        Let $L_{\max}$ be the largest size of a number in the linear inequality system of \eqref{eq:mip-formulation}
        (where the bit-size of a rational number $r/q$ is given by $\log(|r|) + \log(|q|) + 1$).
         Furthermore, $P_{x^\star} \subseteq \R^{n_L}$
         and the size of
         each constraint is bounded by $2 n_L L_{\max} + 2n L_{\max} \leq 4n L_{\max} =: m$ (note that $x^\star, x_i \in \{0,1\}^{n}$ for all $x_i \in \F(I)$).
	
	We now claim that, for all thresholds $T_{\text{pr}}$, the following holds:
	If there is a leader's solution $d$ with profit at least $T_{\text{pr}}$, then there is also a leader's solution $\widehat d$ with profit at least $T_{\text{pr}}$ such that $\widehat d$ has a bit-size of at most $16n^3L_{\max}$.
	For the proof, we use a theorem of Schrijver.

\textit{Fact.} (Theorem 10.2 in \cite{schrijver1998theory}) For any rational $N$-dimensional polyhedron, its vertex complexity is bounded by $4N^2$ times its facet complexity.
	
	An upper bound to the facet complexity of $P_{x^\star}$ is given by $m = 4n L_{\max}$.
	This induces an upper bound to the vertex complexity of $P_{x^\star}$ of $4n_L^2m \le 16n^3 L_{\max}$.
	Then, $P_{x^\star}$ can be described by vectors $a_1, \ldots, a_k, b_1, \ldots, b_t$, each of bit-size at most $16n^3 L_{\max}$, with
	$P_{x^\star} = \conv\{a_1, \ldots, a_k\} + \cone\{b_1, \ldots, b_t\}$.
	Since $P_{x^\star} \neq \emptyset$ and any point in $P_{x^\star}$ is a feasible point in which the leader receives a profit of at least $T_{\text{pr}}$,
         we can set $\widehat d = a_1$.
	Thus, there is a suitable $\widehat d$ of bit-size at most $16n^3 L_{\max}$, which shows the claim.	
	
	Note that the encoding of $\widehat d$ contains the encodings of the vector components as rational numbers. 
	It follows that there are an $x^\star \in \{0,1\}^n$, encoding an $X^\star \in \F(I)$,
        and a $\widehat d: \U_L \to \R$ of polynomial size such that for all $x \in \{0,1\}^n$, encoding $X \in \F(I)$, $\left[p(X^\star) - \widehat d(X^\star) \geq p(X) - \widehat d(X)\right] \land \widehat d(X^\star) \geq T_\text{pr}$ holds. \qed
\end{proof}

\section{Feasibility-Pricing-\sat{} is $\Sigma^p_2$-complete}
\label{sec:SAT-complete}

In this section, we consider \textsc{Feas-Pricing-\sat{}}. We are given a Boolean formula~$I$ in \emph{conjunctive} normal form over the literals $\U(I) := \set{\ell_1,\dots,\ell_N} \cup \set{\overline \ell_1,\dots,\overline \ell_N}$, 
where the set $\sol(I) \subseteq 2^{\U(I)}$ of satisfying subsets is defined as in \cref{sec:framework}.
Moreover, we are given a partition $\U(I) = \U_L \cup \U_F$ (with $\U_L \cap \U_F = \emptyset$),
a function $p : \U(I) \to \Zgeq$,
and a threshold $T_{\text{pr}} \in \Qgeq$.
The problem then is to decide whether the optimal value of \eqref{eq:pricing-pi-feas} is at least $T_{\text{pr}}$.

We give a sketch of the proof idea
and remark that some of our arguments are based on the elegant proof of Böhnlein et al. \cite{bohnlein2023stackelberg}.
We reduce from the canonical $\Sigma^p_2$-complete problem $\exists \forall$-DNF-\sat{}. 
Here, we are given a \sat{} formula $\varphi(A,B)$ in \emph{disjunctive} normal form, 
and we are asked to evaluate whether there exists an assignment of the $A$-variables such that, for all assignments of the $B$-variables, $\varphi(A,B)$ is true.
We now want to transform the DNF formula $\varphi$ into a new CNF formula $I$, and find $\U_L \cup \U_F = \U(I)$, $p$, and $T_{\text{pr}}$ such that the decision version of \eqref{eq:pricing-pi-feas} becomes difficult.
In particular, the idea is that the choice of $d : \U_L \to \R$ forces the follower to select elements that correspond to the assignment of~$A$, and residual elements from
the follower's solution
should correspond to the assignment of $B$.
The first idea is that our formula $I$ begins with a clause $(h_1 \lor h_2)$, where $h_1 \in \U_F$. 
If the leader sets the prices too high, the follower simply chooses $h_1$ and obtains a fixed profit.
In the other case, the follower chooses $h_2$ and the following considerations apply: 
The instance $I$ contains variables $v^t_i,v^f_i,v^o_i$ for $i \in \fromto{1}{n}$, where $v_i^t,v_i^f \in \U_L$ and $v_i^o \in \U_F$.
The idea is that the leader will set prices for $v_i^t$ and $v_i^f$, 
and the follower will generally prefer the element that maximizes their profit, i.e., the element with the smaller price. 
This implicitly determines an assignment of $A$, by having $a_i = 1$ if $d(v_i^t) < d(v_i^f)$ and $a_i=0$ otherwise. 
However, if the prices are too high, the follower simply switches to~$v_i^o$.
Finally, there is a pair of variables $z_1, z_2$ such that the solution contains exactly one of them. 
The leader would like to collect prices on $z_1$, but the value of $z_1$ is linked to the satisfiability of $\exists A \forall B \varphi(A,B)$.
If this formula is false, the follower can evade the leader's prices.
In summary, we are able to show that the leader can collect high prices if and only if $\exists A \forall B \varphi(A,B)$.

\begin{theorem}
\label{thm:pricing-SAT}
\textsc{Feas-Pricing-\sat{}} is $\Sigma^p_2$-hard.
\end{theorem}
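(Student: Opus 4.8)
The plan is to reduce from the canonical $\Sigma^p_2$-complete problem $\exists\forall$-DNF-\sat{}: given a formula $\varphi(A,B)=\bigvee_\ell D_\ell$ in disjunctive normal form over disjoint variable blocks $A=\fromto{a_1}{a_n}$ and $B=\fromto{b_1}{b_m}$, decide whether $\exists A\,\forall B\colon\varphi(A,B)$. The key starting observation is that, for a fixed assignment of $A$, the statement $\forall B\,\varphi(A,B)$ holds if and only if the CNF formula $\neg\varphi(A,B)=\bigwedge_\ell\neg D_\ell$ (each $\neg D_\ell$ being a clause) is \emph{unsatisfiable} over $B$. So the reduction should arrange that the leader's prices pin down the $A$-part, while the follower ``escapes'' the leader precisely when it can still satisfy $\neg\varphi$ over~$B$.

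Concretely, following the sketch, I would build the CNF instance $I$ from three gadgets, each literal living in a pair inside $\U(I)=\U_L\cup\U_F$. (i)~A \emph{master switch}: a clause $(h_1\lor h_2)$ with $h_1\in\U_F$ of large fixed profit, set up so that taking $h_1$ lets the follower drop the rest of the instance and bank a fixed fallback $P_0$; this keeps \eqref{eq:pricing-pi-feas} bounded and prevents the leader from pricing arbitrarily high. (ii)~For each $i$ an \emph{$A$-selection gadget}: literals $v_i^t,v_i^f\in\U_L$ and $v_i^o\in\U_F$ linked by an ``exactly one of three'' sub-formula, with a positive occurrence of $a_i$ in $\neg\varphi$ written as $v_i^t$ and a negative one as $v_i^f$; the follower taking $v_i^t$ (resp.\ $v_i^f$) reads off $a_i=1$ (resp.\ $a_i=0$), whereas taking the follower-owned $v_i^o$ is a local escape to a price-independent value the leader wishes to avoid. (iii)~An \emph{evasion gadget}: a literal $z_1\in\U_L$ of large profit, a complementary $z_2\in\U_F$, and the literal $z_1$ inserted into every clause of $\neg\varphi$, so that taking $z_1$ satisfies all of $\neg\varphi$ for free while taking $z_2$ (hence $\overline{z_1}$) forces the follower to satisfy $\neg\varphi(A,B)$ with genuine $B$-literals --- possible exactly when $\exists B\,\neg\varphi(A,B)$. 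Adding profit-$0$ follower literals for the $B$-variables, I would then calibrate the profits $p$ and the threshold $T_{\text{pr}}$ so that per selection gadget the follower's optimal move is ``take the cheaper of $v_i^t,v_i^f$'', so that the leader can push $z_1$ up to the master-switch cap exactly when the follower is \emph{forced} onto $z_1$, and so that $T_{\text{pr}}$ equals the revenue of that forced outcome.

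For correctness, in the ``yes'' direction the leader fixes a witness $A^\star$ with $\forall B\,\varphi(A^\star,B)$, prices $v_i^{a^\star_i}$ so that it ties with $v_i^o$ and prices $v_i^{1-a^\star_i}$ at a polynomially large ``blocking'' value that makes a bit-flip never worthwhile for the follower, and raises $d(z_1)$ to the master-switch cap; since $\neg\varphi(A^\star,B)$ is unsatisfiable the follower cannot move onto $z_2$, is driven onto $z_1$, and --- using optimistic tie-breaking in the leader's favour --- a direct computation shows the revenue is exactly $T_{\text{pr}}$. In the ``no'' direction, for \emph{every} $A$ there is a $B$ with $\neg\varphi(A,B)$, so whatever prices the leader picks the follower has a profit-optimal response --- escape via $z_2$ along such an $(A,B)$, or a local escape $v_i^o$, or the master switch $h_1$ --- and one shows each option keeps the leader's revenue strictly below $T_{\text{pr}}$.

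The main obstacle is exactly this last point: establishing the follower's best response to an \emph{arbitrary} price vector $d$, not just the intended ones, and showing that no mix of aggressive and cautious prices lets the leader exceed $T_{\text{pr}}$ on a ``no'' instance. Concretely, one must verify that overpricing some $v_i^t$ or $v_i^f$ only drives the follower to $v_i^o$ or to $h_1$ without further revenue, that once $d(z_1)$ crosses the evasion threshold the follower genuinely has a profit-optimal solution taking $z_2$, and that optimistic tie-breaking never benefits the leader beyond the intended amount --- together with the bookkeeping ensuring that all introduced magnitudes ($P_0$, the per-gadget profits, the blocking prices, $T_{\text{pr}}$) are polynomially bounded in $|\varphi|$, so the construction is a genuine polynomial-time reduction into \textsc{Feas-Pricing-\sat{}}.
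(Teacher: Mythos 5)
Your high-level architecture is the same as the paper's: a reduction from $\exists\forall$-DNF-\sat{}, a master switch $(h_1\lor h_2)$ with $h_1\in\U_F$ giving a fixed fallback, per-variable ``exactly one of $v_i^t,v_i^f,v_i^o$'' gadgets where the follower reads off $A$ from the cheaper of the two leader literals, and an evasion pair $z_1,z_2$ with $z_1\in\U_L$ such that taking $z_2$ is allowed only when $\neg\varphi$ can still be satisfied. However, as written, your construction has a concrete gap that is not merely an unfinished analysis: you never link the $v_i^o$ escapes to the evasion gadget. The paper includes the clause $v_i^o\rightarrow z_1$, which (together with $z_1\oplus z_2$) means that using \emph{any} local escape forbids the $z_2$ escape. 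Without this, in the \enquote{yes} direction your claim \enquote{since $\neg\varphi(A^\star,B)$ is unsatisfiable the follower cannot move onto $z_2$} fails: the follower can instead take \emph{all} $v_i^o$'s (falsifying both $v_i^t$ and $v_i^f$) and then still satisfy $\neg\varphi$ over $B$ together with $z_2$, whenever $\neg\varphi$ has no clause consisting solely of $A$-literals. This route strictly beats the intended optimum by the profit of $z_2$, so the follower deviates, the leader collects nothing, and the \enquote{yes} instance maps to a \enquote{no} instance. No choice of profit magnitudes repairs this, because the $z_2$ gain is always on top of the $v_i^o$ gains; the structural clause $v_i^o\rightarrow z_1$ is what the paper uses to turn this strict improvement into a tie that the optimistic assumption breaks in the leader's favour.

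A second, smaller point: you explicitly defer the hard case --- pinning down the follower's best response to an \emph{arbitrary} $d$, which is needed for the \enquote{no} direction. In the paper this is handled by an exact accounting with a single scale $M=2n$ and threshold $k^\star=(n+1)M-n$: one shows the follower's optimum is exactly $n$, that the leader receives nothing if $h_1$ is taken, that $d(z_1)\ge M$, and that $d(X_d)\ge k^\star$ forces $z_1\in X_d$ and exactly one of $v_i^t,v_i^f$ per $i$, which then reads off a valid witness $A$. Your sketch gestures at this but does not supply the calibration; the crucial observation that large leader revenue \emph{forces} the follower's solution to avoid every escape (all of $h_1$, the $v_i^o$'s, and $z_2$ simultaneously) is exactly what needs the missing $v_i^o\rightarrow z_1$ link and the specific magnitudes, and is where your outline stops short.
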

\begin{proof}
Given an instance $\exists A \forall B \varphi(A,B)$ of $\exists \forall$-DNF-SAT with $|A| = |B| = n$, we define the following instance $(I, \U_L, \U_F, p)$ of \textsc{Feas-Pricing-SAT}. 
Let $M := 2n$, let $\texttt{once}(y_1,y_2,y_3)$ be the CNF-\sat{} formula that is true if and only if exactly one of $y_1$, $y_2$, and $y_3$ is true, and let $\oplus$ denote the XOR relation.
\renewcommand{\arraystretch}{1.5}
\begin{tabular}{ll}
  {\it Variables:}\ \ & $V := \set{h_1, h_2, z_1,z_2} \cup \bigcup_{i=1}^n \set{v^t_i, v^f_i, v^o_i, a_i, b_i}$ \\
  {\it Literals:}\ \ &  $\U(I) = V \cup \overline V$ with $\U_L = \bigcup_{i=1}^n \set{v^t_i, v^f_i} \cup \set{z_1},\ \ \U_F = \U(I) \setminus \U_L$ \\
  {\it Profits ($\U_L$):}\ \ & $p(v^t_i) = p(v^f_i) = M \ \forall i \in \fromto{1}{n},\ \ p(z_1) = M$\\
  {\it Profits ($\U_F$):}\ \ & $p(v^o_i) = 1\ \forall i \in \fromto{1}{n},\ \ p(h_1) = n,\ \ p(z_2) = 1,\ \ p(e) = 0 \text{ else}$
\end{tabular}

\noindent%
\begin{tabular}{ll}
  {\it Formula:}& $(h_1 \lor h_2)\land\left(h_1 \rightarrow (\bigwedge_{x \in V \setminus \set{h_1}}\overline x)\right)$\\
  &$\land \left(h_2 \rightarrow ((z_1 \oplus z_2)\land(z_2 \rightarrow \neg\varphi(A,B)))\right)$\\ 
  &$\land \left(h_2 \rightarrow \bigwedge_{i=1}^n\left( \texttt{once}(v_i^t,v_i^f,v_i^o) \land (v_i^t \rightarrow a_i) \land (v_i^f \rightarrow \overline a_i) \land (v_i^o \rightarrow z_1) \right) \right)$
\end{tabular}
This completes the description of the instance. It is easily seen that $I$ is in CNF, using that $\varphi$ is in DNF and De Morgan's law.
Let $k^\star := (n+1)M - n$. 
We claim that the optimal value of \eqref{eq:pricing-pi-feas} is at least $k^\star$ if and only if $\exists A \forall B \varphi(A,B)$ is true.

\paragraph{\enquote{If} direction.}
Assume that there exists an assignment $\alpha : A \to \set{0,1}$ that makes $\exists A \forall B \varphi(A,B)$ true.
We define a pricing function $d : \U_L \to \R$ by having $d(z_1) = M$ and, for all $i \in \fromto{1}{n}$,
\begin{align*}
&d(v^t_i) = M-1, &&d(v^f_i) = M && \text{ if }\alpha(a_i) = 1,\\
&d(v^t_i) = M, &&d(v^f_i) = M - 1 && \text{ if }\alpha(a_i) = 0.
\end{align*}
We claim that this choice of $d$ enables the leader to obtain $k^\star$ profit.
First, we argue that the follower's optimal value is $n$.

If the follower chooses $h_1$, their profit is exactly $n$, since $h_1$ implies that, for all other variables, the negative literal is selected, and only positive literals induce profit for the follower.
Note that this is independent of the choice of $d$.

Otherwise, the follower chooses $h_2$.
Then, they obtain at most a profit of~$1$ from every clause $\texttt{once}(v_i^t,v_i^f,v_i^o)$. 
Furthermore, they cannot obtain profit from~$z_1$, since $d(z_1) = p(z_1) = M$.
At last, we have to consider the element~$z_2$.
For this, suppose that the follower collects a profit of $1$ from $z_2$.
Then, due to $v_i^o \rightarrow z_1$ and $z_1 \oplus z_2$, the follower's solution $X$ does not contain any $v_i^o$, but contains either $v_i^t$ or $v_i^f$ for all~$i \in \fromto{1}{n}$.
If $X$ contains exactly the element from $\{v_i^t, v_i^f\}$ that has cheaper prices $d$, for all $i \in \fromto{1}{n}$,
then, due to $v_i^t \rightarrow a_i$ and $v_i^f \rightarrow \overline a_i$, the variables $A$ have to be set according to the assignment $\alpha$.
Then the clause $z_2 \rightarrow \neg\varphi(A,B)$
and the fact that $\alpha$ makes $\exists A \forall B \varphi(A,B)$ true
imply $z_2 \not\in X$.
If $X$ contains at least one element $v_i^t$ or $v_i^f$ with leader's price $M$, then the total profit obtained from the clauses $\texttt{once}(v_i^t,v_i^f,v_i^o)$ is at most $n-1$.
In all cases, the total follower's profit is at most $n$.
In total, the follower's optimal value is indeed $n$.
Finally, observe that the follower's solution 
\[
	X =\set{\overline h_1, h_2, z_1, \overline z_2} \cup \bigcup_{i=1}^n \set{\overline v^o_i} \cup \bigcup_{\alpha(a_i) = 1} \set{v^t_i, \overline v^f_i, a_i} \cup \bigcup_{\alpha(a_i) = 0} \set{\overline v^t_i, v^f_i, \overline a_i}
\] 
achieves the follower's optimal value $n$ and has $d(X) = k^\star$.
Due to the optimistic assumption, the leader can make at least $k^\star$ profit.

\paragraph{\enquote{Only if} direction.}
Assume that there exists $d : \U_L \to \R$ such that $d$ achieves profit at least $k^\star$ for the leader. 
We need to show that $\exists A \forall B \varphi(A,B)$ is satisfiable.
Let $X_d$ be an optimistic follower's response to $d$, i.e., $d(X_d) \geq k^\star$ and $p(X_d) - d(X_d) = \text{OPT}_F$, where $\text{OPT}_F$ is the follower's optimal value.

First, we show that $\text{OPT}_F = n$.
The follower can always choose $h_1$, hence, $\text{OPT}_F \geq n$.
By construction of the instance, for all $X \in \sol(I)$, we have $p(X) \leq (n+1)M$.
Therefore, the follower's profit is $\text{OPT}_F = p(X_d) - d(X_d) \leq (n+1)M - k^\star = n$.
It follows that $\text{OPT}_F = n$.

Second, if the follower chooses $h_1$, then the leader's profit is $0$, since $h_1$ implies that, for all other variables, the negative literal is selected, and only positive literals induce profit for the leader.
Hence, $h_1 \notin X_d$ and $h_2 \in X_d$ because the leader's profit is at least $k^\star > 0$.

Third, observe that $d(z_1) \geq M$.
Indeed, if $d(z_1) < M$,
then the follower can pick $\set{v^o_1,\dots,v^o_n} \cup \set{z_1}$ as a partial solution and obtain strictly more than $n$ profit, contradicting $\text{OPT}_F = n$.

Fourth, we claim that both $z_1 \in X_d$ and $|X_d \cap \set{v_i^t,v_i^f}| = 1$ for all $i \in \fromto{1}{n}$.
Indeed, assume otherwise, then by the structure of the instance, $X_d$ contains at most $n$ elements $e$ with $p(e) = M$.
Therefore, $p(X_d) \leq nM + 1$.
But, since $d(X_d) \geq k^\star$, we have $p(X_d) - d(X_d) \leq nM + 1 - k^\star = (n+1) - M \leq 0$, which is a contradiction to $\text{OPT}_F \geq n$.

We claim that these four properties implicitly define a satisfying assignment to $\exists A \forall B \varphi(A,B)$ by setting $\alpha(a_i) = 1$ if $v^t_i \in X_d$ (and $a_i \in X_d$) and $\alpha(a_i) = 0$ if $v^f_i \in X_d$ (and $\overline a_i \in X_d$).
For this, assume that $\alpha$ does not satisfy the formula.
Then the follower can choose an assignment of $b_1,\dots,b_n$ that makes $\varphi(\alpha, B)$ false. Then $X' := X_d \setminus \set{z_1, \overline z_2} \cup \set{\overline z_1, z_2}$ is also a valid solution to $I$. 
But since $d(z_1) \geq M$ and $p(z_2) = 1$, the follower's profit of $X'$ is strictly greater than the profit of $X_d$, a contradiction to the fact that $X_d$ is an optimal follower's solution.
We conclude that $\exists A \forall B \varphi(A,B)$ is true. \qed
\end{proof}

\begin{corollary}
\label{cor:d-well-behaved-SAT}
\textsc{Feas-Pricing-\sat{}} remains $\Sigma^p_2$-hard, even if
$d : \U_L \to \R$ in \eqref{eq:pricing-pi-feas} is
restricted to $0 \leq d$ or $d \leq p$ or $0 \leq d \leq p$.
\end{corollary}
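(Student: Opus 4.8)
The plan is to reuse the exact reduction from the proof of \cref{thm:pricing-SAT} without any modification, and to argue that it already certifies hardness for all three restricted variants. The crucial observation is that the pricing function $d$ exhibited in the ``if'' direction of that proof is \emph{already} well-behaved: it takes only the values $M-1$ and $M$ on the items $v_i^t, v_i^f$, and the value $M$ on $z_1$, whereas the defined profits are $p(v_i^t) = p(v_i^f) = p(z_1) = M$. Hence this $d$ satisfies $0 \le d(e) \le p(e)$ for every $e \in \U_L$ (we may assume $n \ge 1$, the case $n = 0$ being trivial), and therefore lies in each of the three restricted domains $\set{d : 0 \le d}$, $\set{d : d \le p}$, and $\set{d : 0 \le d \le p}$. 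So the ``if'' direction carries over verbatim: if $\exists A \forall B \varphi(A,B)$ holds, the leader can achieve profit at least $k^\star$ using a pricing function from any of these restricted domains.

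For the converse, I would invoke monotonicity under shrinking the leader's feasible set: each restricted domain is a subset of $\set{d : \U_L \to \R}$, so on the constructed instance the optimal value of the restricted problem is at most the optimal value of the unrestricted problem \eqref{eq:pricing-pi-feas}. Thus, if the restricted optimal value is at least $k^\star$, then so is the unrestricted one, and the ``only if'' direction of \cref{thm:pricing-SAT} already yields that $\exists A \forall B \varphi(A,B)$ is true.

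Combining the two parts shows that, on the same instance and with the same threshold $T_{\text{pr}} = k^\star$, the restricted optimal value is at least $k^\star$ if and only if $\exists A \forall B \varphi(A,B)$, for each of the three restrictions simultaneously; hence the map from $\exists \forall$-DNF-\sat{} instances is still a valid polynomial-time reduction, and each restricted \textsc{Feas-Pricing-\sat{}} problem is $\Sigma^p_2$-hard.

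I do not expect a genuine obstacle here: the substantive work was already done in \cref{thm:pricing-SAT} by choosing a witness $d$ with small, two-sided-bounded values. The only thing requiring care is to double-check that every leader price used in the ``if'' direction really lies in $[0, p(e)]$ — in particular for $z_1$, where $d(z_1) = M = p(z_1)$ sits exactly at the upper boundary — and that no step of the ``only if'' argument secretly relied on negative or unbounded prices; a direct inspection of the four displayed values of $d$ against the profit table confirms both.
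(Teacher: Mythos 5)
Your proof is correct and takes essentially the same approach as the paper: both observe that the explicit $d$ from the ``if'' direction of \cref{thm:pricing-SAT} already satisfies $0 \le d \le p$, and both note that the ``only if'' direction carries over because restricting the leader's domain can only decrease the achievable value. Your phrasing in terms of monotonicity of the optimum under shrinking the feasible set is just a cleaner restatement of the paper's ``stronger assumption'' observation.
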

\begin{proof}
We check that both directions of the proof of \cref{thm:pricing-SAT} remain true. 
For the \enquote{if} direction, observe that the function $d$ defined there already has the property $0 \leq d \leq p$.
For the \enquote{only if} direction, the assumption that a function $d$ with
any of the three restrictions
exists
is stronger than only assuming $d : \U_L \to \R$.
Clearly, starting with a stronger assumption, the reasoning remains correct.
\qed
\end{proof}

\section{Proof of the Main Theorem}
\label{sec:meta-reduction}
We prove our main theorem separately for maximization and minimization problems,
since there are subtle differences between the two cases. We remark that a joint proof would also be possible if one introduced more abstraction, but this would come at the cost of readability.
We are now ready to prove our main theorem.
\begin{theorem}
\label{thm:meta-reduction-max}
For any problem $\Pi$ that is an $\LOP$-complete maximization LOP problem (see \cref{sec:framework}), the problem \textsc{Pricing-$\Pi$} as in \eqref{eq:pricing-pi-max-2} is $\Sigma^p_2$-hard.
\end{theorem}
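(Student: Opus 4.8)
The plan is to reduce from \textsc{Feas-Pricing-\sat{}}, which is $\Sigma^p_2$-hard by \cref{thm:pricing-SAT}. I will actually use the concrete instances produced in the proof of that theorem, because they enjoy a useful asymmetry: for such an instance $(I,\U_L,\U_F,p)$ with threshold $T$, if the underlying $\exists\forall$-DNF-\sat{} formula is true then the leader already has a strategy $\widehat d$ with $0\le\widehat d\le p$ of value $\ge T$ (this is exactly what the \enquote{if} direction of \cref{thm:pricing-SAT} together with \cref{cor:d-well-behaved-SAT} gives), whereas if \emph{any} function $\widehat d:\U_L\to\R$ has value $\ge T$ then the formula is true (the \enquote{only if} direction). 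I will also use that such an $I$ always admits a satisfying literal set contained in $\U_F$, namely the one obtained by setting $h_1$ to true (equivalently, all other variables to false). Now apply $\LOP$-completeness of $\Pi$: there are polynomial-time maps $g$, $(f_I)$ witnessing $\textsc{Satisfiability}\leqSSP\Pi$, and $g(I)$ carries an inherent weight function $w_0:=w^{(g(I))}$ and threshold $t_0:=t^{(g(I))}$. Since $\Pi$ is a maximization LOP problem, $w_0\le 0$ pointwise, and by $\LOP$-completeness every $F\in\F(g(I))$ satisfies $w_0(F)\ge t_0$, with $w_0(F)=t_0$ exactly when $F\in\sol(g(I))$.

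I build the instance of \textsc{Pricing-$\Pi$} on $g(I)$ as follows: declare $\U'_L:=f_I(\U_L)$ and $\U'_F:=\U'(g(I))\setminus\U'_L$, keep the threshold $T$, and set profits $p'(e):=-\Lambda\,w_0(e)+p(f_I^{-1}(e))$ for $e\in f_I(\U(I))$ and $p'(e):=-\Lambda\,w_0(e)$ otherwise, where $\Lambda:=1+\sum_{u\in\U(I)}p(u)$. Then $p'$ is nonnegative (using $w_0\le 0$), integral, and of polynomial size, and the $\sol(g(I))$-solution whose image-part equals $f_I$ of the $h_1$-satisfying set lies entirely in $\U'_F$, so \eqref{eq:pricing-pi-max-2} is bounded for this instance. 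For $X\in\F(g(I))$ write $\widehat X:=f_I^{-1}\!\big(X\cap f_I(\U(I))\big)$, and for $d:\U'_L\to\R$ let $\widehat d:=d\circ f_I$ on $\U_L$; then the follower's objective is $p'(X)-d(X)=-\Lambda w_0(X)+p(\widehat X)-\widehat d(\widehat X)$.

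For the \enquote{if} direction, lift the promised nonnegative strategy to $d:=\widehat d\circ f_I^{-1}$. Since $\widehat d\ge 0$, every $X\in\F(g(I))\setminus\sol(g(I))$ has follower value at most $-\Lambda t_0-\Lambda+\sum_u p(u)=-\Lambda t_0-1$, strictly less than the value $-\Lambda t_0$ attained inside $\sol(g(I))$; hence all of the follower's optimal responses lie in $\sol(g(I))$, and on that set $\widehat X$ ranges exactly over $\sol(I)$ by the SSP identity $\{f_I(S):S\in\sol(I)\}=\{S'\cap f_I(\U(I)):S'\in\sol(g(I))\}$. Both players' objectives depend only on $\widehat X$ and equal $p(\widehat X)-\widehat d(\widehat X)$ (up to the constant $-\Lambda t_0$) and $\widehat d(\widehat X)$, respectively, so the game restricts precisely to the \sat{}-pricing game under $\widehat d$; by the optimistic rule the leader again obtains value $\ge T$.

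For the \enquote{only if} direction, suppose $d:\U'_L\to\R$ gives the leader value $\ge T$, with optimistic follower response $X^\star$. If $X^\star\in\sol(g(I))$, then $\widehat X^\star\in\sol(I)$, and by the same correspondence $\widehat X^\star$ is a \sat{}-follower-optimal response to $\widehat d$ with $\widehat d(\widehat X^\star)=d(X^\star)\ge T$, so $\widehat d$ has \sat{}-value $\ge T$. Otherwise $X^\star\in\F(g(I))\setminus\sol(g(I))$, so $w_0(X^\star)\ge t_0+1$; comparing the follower's objective at $X^\star$ against an arbitrary $X\in\sol(g(I))$ and using $p(\widehat X)\ge 0$, $p(\widehat X^\star)\le\Lambda-1$, and $d(X^\star)\ge T$ yields $d(X)\ge T+1$ for \emph{every} solution $X\in\sol(g(I))$; translating through $f_I$, every $S\in\sol(I)$ has $\widehat d(S)\ge T+1$, so the \sat{}-leader playing $\widehat d$ beats $T$ regardless of the \sat{}-follower's choice. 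In both cases we exhibited some (possibly unbounded) $\widehat d:\U_L\to\R$ of \sat{}-value $\ge T$, which by the recalled property of our \sat{}-instances forces the $\exists\forall$-DNF-\sat{} formula to be true. The step I expect to be the main obstacle is exactly this last direction: in \textsc{Pricing-$\Pi$} the follower optimizes over all of $\F(g(I))$, not merely over the threshold-respecting solutions $\sol(g(I))$ that encode assignments, so a priori the leader could gain by steering the follower to a feasible $\Pi$-solution with no corresponding assignment; the penalty $-\Lambda\,w_0$ on the inherent LOP-weight is the device that handles this, either pinning the follower inside $\sol(g(I))$ or, if a suboptimal solution is forced, driving the leader's prices on \emph{all} of $\sol(g(I))$ above the threshold in a way that the \sat{}-pricing instance inherits.
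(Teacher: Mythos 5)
Your proof is correct and takes essentially the same route as the paper: reduce from \textsc{Feas-Pricing-\sat{}}, lift leader prices along the SSP injection $f_I$, scale the inherent LOP weights by a large constant so that the follower is pinned inside $\sol(g(I))$, and observe that the restricted game is isomorphic to the \sat{}-pricing game. The paper factors the "pinning" step into an explicit lemma that assumes $d'(Y_{d'})\geq 0$ (their Lemma 5) and then notices that in the "only if" direction this hypothesis is automatic because the leader's value is at least $k^\star\geq 0$, so both directions invoke the same lemma cleanly. You instead handle the "only if" direction by a case split on whether the follower's optimistic response $X^\star$ lies in $\sol(g(I))$; your second case (where $X^\star\notin\sol(g(I))$) is handled by a direct inequality chain and is sound, but it is in fact vacuous here: since $T=k^\star\geq 0$ and $\sol(g(I))$ contains a point $Y$ with $\widehat Y\subseteq\U_F$ (hence $d(Y)=0$), the bound $d(X)\geq T+1$ you derive for all $X\in\sol(g(I))$ would force $0\geq T+1$, a contradiction. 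So the extra case costs you nothing but is not needed; the paper's observation that $d'(Y_{d'})\geq k^\star\geq 0$ makes the argument uniform. Your tighter choice of scaling constant $\Lambda=1+\sum_u p(u)$ (versus the paper's $M=4n\sum_e p(e)$) also works, as your calculations show.
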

In the remainder of this section, we prove \cref{thm:meta-reduction-max}.
For this, let an instance $\exists A \forall B \varphi(A,B)$ of the $\Sigma^p_2$-complete problem $\exists \forall$-DNF-\sat{} be given.
By \cref{thm:pricing-SAT}, we can find in polynomial time a new Boolean formula $I_\text{\sat{}}$, a partition $\U(I_\text{\sat{}}) = \U_L \cup \U_F$, 
profits $p : \U_L \cup \U_F \to \Z_{\geq 0}$, and a number $k^\star \in \Zgeq$ such that $\exists A \forall B \varphi(A,B)$ is true if and only if the following \textsc{Feas-Pricing-\sat{}} instance has an optimal value of at least $k^\star$.
\begin{equation} \label{eq:SAT-meta-thm}
  \begin{aligned}
    \max_{d \colon \U_L \to \R}\ & d(X^\star)\\
    \text{s.t. } & X^\star \in \argmax_{X \in \sol(I_\text{\sat{}})}\ p(X) - d(X). \\
  \end{aligned}
\end{equation}

Furthermore, by \cref{cor:d-well-behaved-SAT}, this fact remains true even if the constraint $0 \leq d \leq p$ is added to \eqref{eq:SAT-meta-thm}.
In order to prove \cref{thm:meta-reduction-max}, we give a reduction from \textsc{Feas-Pricing-\sat{}} to \textsc{Pricing-$\Pi$} for the abstract problem $\Pi$.
We make use of the fact that $\Pi$ is $\LOP$-complete.
This implies that, given the instance $I_\text{\sat{}}$ of \sat{}, one can compute in polynomial time a tuple $(I_\Pi, w^{(I_\Pi)}, t^{(I_\Pi)})$,
where $I_\Pi$ is an instance of $\Pi$ with universe $\U(I_\Pi)$, set of feasible solutions $\F(I_\Pi)$, weight function $w^{(I_\Pi)} : \U(I_\Pi) \to \Z_{\geq 0}$, and threshold $t^{(I_\Pi)} \in \Zgeq$.
Note that a maximization LOP problem can be expressed by non-negative weights $w^{(I_\Pi)}$ and threshold $t^{(I_\Pi)}$ by defining $\sol(I_\Pi) = \set{F \in \F(I_\Pi) : w^{(I_\Pi)}(F) \geq t^{(I_\Pi)}}$.
In order to simplify the notation, we write $\U_\text{\sat{}}$, $\U_\Pi$, $\F$, $w$, and $t$ from now on for these objects (they depend on the specific instance of \sat{} or $\Pi$, but the instance will be fixed throughout this proof). 
By the definition of $\LOP$-completeness, the instances of \sat{} and $\Pi$ are related by the SSP property: There is a (polynomial-time computable) injective function $f : \U_\text{\sat{}} \to \U_\Pi$ and
\begin{align*}
\sol(I_\Pi) := \set{F \in \F &: w(F) \geq t} = \set{F \in \F : w(F) = t}\\
\text{ such that} \quad &\forall S' \in \sol(I_\Pi) : f^{-1}(S') \in \sol(I_\text{\sat{}})\\
\text{ and} \quad &\forall S \in \sol(I_\text{\sat{}}): \exists S' \in \sol(I_\Pi) : S' \cap f(\U_\text{\sat{}}) = f(S).
\end{align*}

Based on this abstract reduction from \sat{} to $\Pi$, we now define an instance of \textsc{Pricing-$\Pi$}. The underlying instance of $\Pi$ is $I_\Pi$. The partition of the universe $\U_\Pi = \U'_L \cup \U'_F$ is sketched in \cref{fig:schematic-meta-reduction} and given by
\begin{align*}
\U'_L &:= f(\U_L)\\
\U'_F &:= f(\U_F) \cup (\U_\Pi \setminus f(\U_\text{\sat{}})).
\end{align*}

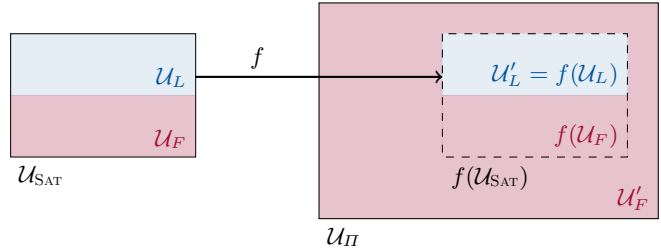
\begin{figure}
  \centering
  \begin{tikzpicture}[scale=0.82]
    \draw[] ($(-5,0)$) rectangle ($(-2,-2)$);
    \node[below right] at (-5,-2) {$\U_{\textsc{\sat{}}}$};
    \draw[lcolor, opacity=0, fill=lcolor, fill opacity=0.10] ($(-5,0)$) rectangle ($(-2,-1)$);
    \node[above left] at (-2,-1) {\color{lcolor}$\U_L$};
    \draw[fcolor, opacity=0, fill=fcolor, fill opacity=0.25] ($(-5,-1)$) rectangle ($(-2,-2)$);
    \node[above left] at (-2,-2) {\color{fcolor}$\U_F$};
    
    \draw[fill=fcolor, fill opacity=0.25] ($(0,0.5)$) rectangle ($(5.5,-3)$);
    \draw[dashed, fill=white] ($(2,0)$) rectangle ($(5,-2)$);
    \node[below right] at (2,-2) {$f(\U_{\textsc{\sat{}}})$};
    \node[below right] at (0,-3) {$\U_\Pi$};
    \draw[lcolor, opacity=0, fill=lcolor, fill opacity=0.10] ($(2,0)$) rectangle ($(5,-1)$);
    \node[above left] at (5,-1) {\color{lcolor}$\U'_L = f(\U_L)$};
    \draw[fcolor, opacity=0, fill=fcolor, fill opacity=0.25] ($(2,-1)$) rectangle ($(5,-2)$);
    \node[above left] at (5,-2) {\color{fcolor}$f(\U_F)$};
    \node[above left] at (5.5,-3) {\color{fcolor}$\U'_F$};
    
    \draw[->, thick] (-2,-0.7) -- (2,-0.7);
    \node[above] () at (-1,-0.7) {$f$};
  \end{tikzpicture}
  \caption{
    The relation between the universes when applying the meta-reduction from \textsc{Feas-Pricing-\sat{}} and \textsc{Pricing-$\Pi$}.
    The function $f$ maintains a one-to-one correspondence between the elements of $\U_L$ and $f(\U_L)$ as well as $\U_F$ and $f(\U_F)$.
  }
  \label{fig:schematic-meta-reduction}
\end{figure}

Let $n := |\U_\text{\sat{}}|$ and define 
$M := 4n\sum_{e \in \U_\text{\sat{}}}p(e)$.
We can think of $M$ as a very large value, i.e.,  $M \gg p(e)$ for all $e \in \U_\text{\sat{}}$.
Based on this, we define a new profit function $p' : \U'_L \cup \U'_F \to \Z_{\geq 0}$ by 
\[
p'(e) := 
\begin{cases}
Mw(e) + p(f^{-1}(e)) &\text{ if }e \in f(\U_\text{\sat{}})\\
Mw(e)				&\text{ if }e \in \U_\Pi \setminus f(\U_\text{\sat{}}).\\
\end{cases}
\]
In total, these definitions give rise to the following instance of $\textsc{Pricing-$\Pi$}$:

\begin{equation} \label{eq:pi-meta-thm}
  \begin{aligned}
    \max_{d' \colon \U'_L \to \R}\ & d'(Y^\star)\\
    \text{s.t. } & Y^\star \in \argmax_{Y \in \F}\ p'(Y) - d'(Y) \\
  \end{aligned}
\end{equation}
Observe that $p' \geq 0$, since $M,w \geq 0$, and so \eqref{eq:pi-meta-thm} is a valid instance of $\textsc{Pricing-$\Pi$}$.
In the remainder of this section, we show that $\exists A \forall B \varphi(A, B)$ is true if and only if both \eqref{eq:SAT-meta-thm} and \eqref{eq:pi-meta-thm} have value at least $k^\star$.
Indeed, if we can show this, we have established a polynomial-time reduction from $\exists \forall$-DNF-\sat{} to $\textsc{Pricing-$\Pi$}$, showing that $\textsc{Pricing-$\Pi$}$ is $\Sigma^p_2$-hard.
First, let $\alpha_\Pi$ be the optimal value of $I_\Pi$,~i.e.,
\[
\alpha_\Pi := \max_{F \in \F} w(F),
\]
which is equal to $w(Y)$ for all $Y \in \sol(I_\Pi)$ by definition. 
Given some $d' : \U'_L \to \R$, let $Y_{d'}$ be an optimal follower's response to $d'$, chosen optimistically, 
i.e., a set $Y \in \F$ that maximizes $p'(Y) - d'(Y)$, and among all these sets one that maximizes $d'(Y)$. 
Note that $Y_{d'}$ is not necessarily unique, but this will not influence the correctness of our arguments.

\begin{lemma}
\label{lem:follower-opt} 
  No matter which $d'$ the leader selects in \eqref{eq:pi-meta-thm}, the follower's optimal value is always at least $\alpha_\Pi M$.
\end{lemma}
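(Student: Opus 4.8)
The plan is to produce, once and for all, a single feasible solution $Y_0 \in \F$ for the follower whose objective value $p'(Y_0) - d'(Y_0)$ is already at least $\alpha_\Pi M$, no matter which $d' \colon \U'_L \to \R$ the leader picks in \eqref{eq:pi-meta-thm}. Since the follower maximizes $p'(Y) - d'(Y)$ over all $Y \in \F$, the existence of such a $Y_0$ immediately gives the claim. The $Y_0$ we aim for is one that simply avoids all of the leader's elements $\U'_L$, so that $d'(Y_0) = 0$ holds unconditionally.

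To obtain such a $Y_0$, I would go back to the instance $I_\text{\sat{}}$ built in the proof of \cref{thm:pricing-SAT}. There the follower always has the option of \enquote{choosing $h_1$}: the literal set $X_{h_1} := \set{h_1} \cup \set{\overline x : x \in V \setminus \set{h_1}}$ is a satisfying subset, since the clause $(h_1 \lor h_2)$ is satisfied by $h_1$ and every clause of the form $h_2 \rightarrow (\cdots)$ is vacuously satisfied; hence $X_{h_1} \in \sol(I_\text{\sat{}})$. Moreover, every element of $\U_L = \bigcup_{i=1}^n \set{v^t_i, v^f_i} \cup \set{z_1}$ is a positive literal different from $h_1$, whereas $X_{h_1}$ contains only $h_1$ together with negative literals, so $X_{h_1} \cap \U_L = \emptyset$.

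Next, I would transport this solution through the SSP correspondence underlying the $\LOP$-completeness reduction: since for every $S \in \sol(I_\text{\sat{}})$ there is some $S' \in \sol(I_\Pi)$ with $S' \cap f(\U_\text{\sat{}}) = f(S)$, applying this to $S = X_{h_1}$ yields some $Y_0 \in \sol(I_\Pi)$ with $Y_0 \cap f(\U_\text{\sat{}}) = f(X_{h_1})$. In particular $\sol(I_\Pi) \neq \emptyset$, so $\alpha_\Pi = t$ and $w(Y_0) = \alpha_\Pi$. Since $f$ is injective and $\U'_L = f(\U_L) \subseteq f(\U_\text{\sat{}})$, we get $Y_0 \cap \U'_L = Y_0 \cap f(\U_\text{\sat{}}) \cap f(\U_L) = f(X_{h_1}) \cap f(\U_L) = f(X_{h_1} \cap \U_L) = \emptyset$. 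Hence $d'(Y_0) = 0$ for every $d'$, while $p'(Y_0) \geq \sum_{e \in Y_0} M w(e) = M w(Y_0) = \alpha_\Pi M$ because $p'(e) \geq M w(e)$ holds pointwise (the extra summand $p(f^{-1}(e))$ is nonnegative). Therefore the follower's optimal value is at least $p'(Y_0) - d'(Y_0) \geq \alpha_\Pi M$.

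The only part requiring genuine care is the bookkeeping across the two middle steps: that $X_{h_1}$ really lies in $\sol(I_\text{\sat{}})$ and really misses $\U_L$, and then that injectivity of $f$ collapses $Y_0 \cap \U'_L$ to $\emptyset$. Everything afterwards — $d'(Y_0) = 0$ and the estimate $p' \geq M w$ — is immediate from the definitions, so I do not expect any further obstacle.
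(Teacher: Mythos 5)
Your proof is correct and follows essentially the same route as the paper's: exhibit a solution $Y_0 \in \sol(I_\Pi)$ disjoint from $\U'_L$, observe $d'(Y_0) = 0$, and bound $p'(Y_0) \geq M w(Y_0) = \alpha_\Pi M$. The only difference is that the paper invokes the standing ``w.l.o.g.\ boundedness'' assumption (there is some $X \in \sol(I_\text{\sat{}})$ with $X \subseteq \U_F$), whereas you explicitly construct such an $X$ as the $h_1$-solution $X_{h_1}$ from the instance built in \cref{thm:pricing-SAT}; this makes the argument a bit more self-contained and shows the boundedness hypothesis is automatically satisfied by that construction, but it is the same idea.
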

\begin{proof}
  Let $d' : \U'_L \to \R$ be arbitrary.
Consider problem~\eqref{eq:SAT-meta-thm}. We can assume w.l.o.g. that \eqref{eq:SAT-meta-thm} is bounded, i.e., that there is some $X \in \sol(I_\text{\sat{}})$ with $X \subseteq \U_F$.
Then, by the SSP property, there is some $Y \in \sol(I_\Pi)$ such that $Y \cap f(\U_\text{\sat{}}) = f(X)$. 
This implies $Y \subseteq \U'_F$  (see \cref{fig:schematic-meta-reduction}). 
Therefore, $d'(Y) = 0$ and $p'(Y) - d'(Y) = p'(Y) \geq M w(Y) = \alpha_\Pi M$, since $Y \in \sol(I_\Pi)$.
This implies that, no matter which function $d'$ the leader selects, the follower's optimal value is always at least~$\alpha_\Pi M$.  \qed
\end{proof}

Next, consider the following definition. Let $d' \colon \U'_L \to \R$ be called \emph{reasonable} if $d'(Y_{d'}) \geq 0$ 
, i.e., the leader expects to gain any profit.

\begin{lemma} 
\label{lem:F-Sol-equivalence}
If we assume  that $d'(Y_{d'}) \geq 0$, then the following holds:
\begin{equation} \label{eq:F-Sol-Equivalence}
\begin{aligned}
&\max\set{d'(Y^\star) : Y^\star \in \argmax_{Y \in \F}\set{p'(Y) - d'(Y)}} \\
= &\max\set{d'(Y^\star) : Y^\star \in \argmax_{Y \in \sol(I_\Pi)}\set{p'(Y) - d'(Y)}}.
\end{aligned}
\end{equation}
\end{lemma}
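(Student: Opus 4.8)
The plan is to first establish the structural fact that, under the hypothesis $d'(Y_{d'}) \ge 0$, the follower's optimistic response $Y_{d'}$ necessarily lies in $\sol(I_\Pi)$, and then to derive \eqref{eq:F-Sol-Equivalence} from this by a routine set-chasing argument. Throughout, write $P := \sum_{e \in \U_\text{\sat{}}} p(e)$; by the definition of $p'$ and the injectivity of $f$, every $Y \in \F$ satisfies $M\,w(Y) \le p'(Y) \le M\,w(Y) + P$. Recall also that $M = 4nP > P$ (the \sat{}-pricing instance produced by \cref{thm:pricing-SAT} has strictly positive total profit), that $w(Y) \le \alpha_\Pi$ for every $Y \in \F$, and that $\sol(I_\Pi) = \set{F \in \F : w(F) = \alpha_\Pi}$.

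\emph{Step 1 (the follower stays on $\sol(I_\Pi)$).} Suppose towards a contradiction that $Y_{d'} \notin \sol(I_\Pi)$. Then $w(Y_{d'}) \le \alpha_\Pi - 1$ by integrality of the weights, so
\[
p'(Y_{d'}) - d'(Y_{d'}) \ \le\ p'(Y_{d'}) \ \le\ M(\alpha_\Pi - 1) + P \ <\ M\alpha_\Pi ,
\]
where the first inequality uses $d'(Y_{d'}) \ge 0$ and the last uses $P < M$. But $p'(Y_{d'}) - d'(Y_{d'})$ is exactly the follower's optimal value, which by \cref{lem:follower-opt} is at least $\alpha_\Pi M$ --- a contradiction. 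Hence $w(Y_{d'}) = \alpha_\Pi$, i.e., $Y_{d'} \in \sol(I_\Pi)$.

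\emph{Step 2 (deducing \eqref{eq:F-Sol-Equivalence}).} Since $\sol(I_\Pi) \subseteq \F$, the maximum of $p'(Y) - d'(Y)$ over $\sol(I_\Pi)$ is at most its maximum over $\F$; and by Step 1 the set $Y_{d'}$ attains the latter maximum while lying in $\sol(I_\Pi)$, so the two maxima are equal, whence $\argmax_{Y \in \sol(I_\Pi)}\set{p'(Y) - d'(Y)} = \argmax_{Y \in \F}\set{p'(Y) - d'(Y)} \cap \sol(I_\Pi)$. Now the left-hand side of \eqref{eq:F-Sol-Equivalence} equals $d'(Y_{d'})$, because $Y_{d'}$ is by definition the leader-optimal element of $\argmax_{Y \in \F}\set{p'(Y) - d'(Y)}$; since $Y_{d'}$ also lies in $\argmax_{Y \in \sol(I_\Pi)}\set{p'(Y) - d'(Y)}$, the right-hand side is at least $d'(Y_{d'})$. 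Conversely, any $Y^\star$ maximizing $d'$ over $\argmax_{Y \in \sol(I_\Pi)}\set{p'(Y) - d'(Y)}$ also lies in $\argmax_{Y \in \F}\set{p'(Y) - d'(Y)}$, whence $d'(Y^\star) \le d'(Y_{d'})$. Thus the two sides of \eqref{eq:F-Sol-Equivalence} coincide.

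I expect Step 1 to be the only real obstacle; Step 2 is pure bookkeeping. Step 1 is also precisely where the two design choices of the construction are spent: the scaling $M \gg p(\cdot)$ ensures that any follower response of non-maximal $w$-weight is strictly worse than the weight-optimal responses guaranteed by \cref{lem:follower-opt}, while the hypothesis $d'(Y_{d'}) \ge 0$ --- supplied in the application by restricting to $0 \le d \le p$ via \cref{cor:d-well-behaved-SAT} --- prevents the leader from offsetting a low-weight response by making $d'$ very negative on it.
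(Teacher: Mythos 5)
Your proof is correct and follows essentially the same route as the paper: establish $Y_{d'} \in \sol(I_\Pi)$ by contradiction using the scaling $M > p(\U_\text{\sat{}})$ and the lower bound from \cref{lem:follower-opt} (where the hypothesis $d'(Y_{d'}) \ge 0$ is spent), then chase the argmax sets. Your Step~2 spells out the bookkeeping a bit more explicitly than the paper, but the mathematical content is identical.
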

 Roughly speaking, \cref{lem:F-Sol-equivalence} states that we can assume w.l.o.g. in \eqref{eq:pi-meta-thm} that the follower optimizes only over $\sol(I_\Pi)$ instead of over $\F \supseteq \sol(I_\Pi)$. 
 Intuitively, \cref{lem:F-Sol-equivalence} follows from the fact that $M$ is very large. 
 However, note that \cref{lem:F-Sol-equivalence} cannot hold without any assumptions on $d'$. 
 If the leader chooses a possibly very unreasonable $d' \ll 0$ with $\lvert d'(e) \rvert \gg \lvert p'(e) \rvert$ for some $e \in \U'_L$, the best solution for the follower might not be contained in $\sol(I_\Pi)$ anymore. 
\cref{lem:F-Sol-equivalence} then establishes that, for all \enquote{natural} $d'$, this does not happen.
 
\begin{proof} 
Let us assume that $d'(Y_{d'}) \geq 0$.
First, we claim that
\begin{equation}
\max_{Y \in \F}\set{p'(Y) - d'(Y)} = \max_{Y \in \sol(I_\Pi)}\set{p'(Y) - d'(Y)}.
\label{eq:max-are-equal}
\end{equation}
Note that, by definition,  $Y_{d'}$ is an optimal follower's response to $d'$, 
hence the left-hand side of  $\eqref{eq:max-are-equal}$ is equal to $p'(Y_{d'}) - d'(Y_{d'})$. 
We show that the right-hand side is equal to $p'(Y_{d'}) - d'(Y_{d'})$ as well. Since $\sol(I_\Pi) \subseteq \F$, it suffices to show that $Y_{d'} \in \sol(I_\Pi)$.
For the sake of contradiction, assume $Y_{d'} \not\in \sol(I_\Pi)$. Then $w(Y_{d'}) < \alpha_\Pi$. Hence,
\[
p'(Y_{d'}) - d'(Y_{d'}) \leq (\alpha_\Pi - 1)M + p(\U_\text{\sat{}}) < \alpha_\Pi M.
\]
But, by the definition of $Y_{d'}$, the term $p'(Y_{d'}) - d'(Y_{d'})$ is the follower's optimal value, so this is a contradiction to \cref{lem:follower-opt}. 

 Now, note that \eqref{eq:max-are-equal} implies 
\begin{equation}
\argmax_{Y \in \F}\set{p'(Y) - d'(Y)} \ \supseteq \ \ \argmax_{Y \in \sol(I_\Pi)}\set{p'(Y) - d'(Y)}.
\label{eq:argmax-equal}
\end{equation}
Furthermore, $Y_{d'}$ is by definition an element maximizing $d'(Y)$ over the set on the left-hand side in \eqref{eq:argmax-equal}.
Furthermore, we know that $Y_{d'} \in \sol(I_\Pi)$. 
Therefore, \eqref{eq:F-Sol-Equivalence} is true. \qed
\end{proof}

\begin{lemma}
\label{lem:meta-thm-if}
If $\exists A \forall B \varphi(A,B)$ is satisfiable, then \eqref{eq:pi-meta-thm} has value at least $k^\star$.
\end{lemma}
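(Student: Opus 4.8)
The plan is to lift the good \sat{} pricing function provided by \cref{cor:d-well-behaved-SAT} along the injection $f$, and then invoke \cref{lem:F-Sol-equivalence} so that the follower in \eqref{eq:pi-meta-thm} may be assumed to optimize over $\sol(I_\Pi)$ only. Since $\exists A \forall B \varphi(A,B)$ is satisfiable, \cref{cor:d-well-behaved-SAT} (via \cref{thm:pricing-SAT}) provides a function $d \colon \U_L \to \R$ with $d \geq 0$ such that an optimistic follower response $X_d$ to $d$ in \eqref{eq:SAT-meta-thm} satisfies $d(X_d) \geq k^\star$ and $p(X_d) - d(X_d) = \beta$, where $\beta := \max_{X \in \sol(I_\text{\sat{}})}\bigl(p(X) - d(X)\bigr)$ is the follower's optimal value there. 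I would then define $d' \colon \U'_L \to \R$ by $d'(f(e)) := d(e)$; this is well defined since $\U'_L = f(\U_L)$ and $f$ is injective, and $d' \geq 0$.

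The core step is a bookkeeping identity on $\sol(I_\Pi)$. For any $Y \in \sol(I_\Pi)$ we have $w(Y) = \alpha_\Pi$, so, setting $S_Y := f^{-1}(Y \cap f(\U_\text{\sat{}}))$, the definition of $p'$ yields $p'(Y) = M\alpha_\Pi + p(S_Y)$, while the definition of $d'$ yields $d'(Y) = d(S_Y)$; moreover $S_Y \in \sol(I_\text{\sat{}})$ by the SSP property. Hence $p'(Y) - d'(Y) = M\alpha_\Pi + \bigl(p(S_Y) - d(S_Y)\bigr) \leq M\alpha_\Pi + \beta$. Conversely, by the second (surjectivity) part of the SSP property there is $Y_0 \in \sol(I_\Pi)$ with $Y_0 \cap f(\U_\text{\sat{}}) = f(X_d)$, i.e., $S_{Y_0} = X_d$; for this $Y_0$ one gets $p'(Y_0) - d'(Y_0) = M\alpha_\Pi + \beta$ and $d'(Y_0) = d(X_d) \geq k^\star$, so $Y_0 \in \argmax_{Y \in \sol(I_\Pi)}\bigl(p'(Y) - d'(Y)\bigr)$.

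To finish, note $d'(Y_{d'}) \geq 0$ because $d' \geq 0$, so \cref{lem:F-Sol-equivalence} applies and the profit the leader obtains in \eqref{eq:pi-meta-thm} from this particular $d'$ equals $\max\bigl\{ d'(Y^\star) : Y^\star \in \argmax_{Y \in \sol(I_\Pi)}(p'(Y) - d'(Y)) \bigr\}$, which is at least $d'(Y_0) \geq k^\star$. Since \eqref{eq:pi-meta-thm} is a maximum over all leader choices, its value is at least $k^\star$.

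I expect the main obstacle to be making precise that the large multiplier $M$ really forces the follower's global optimum (over all of $\F$) into $\sol(I_\Pi)$, so that \cref{lem:F-Sol-equivalence} is applicable and the residual elements $\U_\Pi \setminus f(\U_\text{\sat{}})$ --- which receive price $d' = 0$ --- cannot give the follower an escape route; this is handled by \cref{lem:follower-opt} together with \cref{lem:F-Sol-equivalence}, leaving only the identity $p'(Y) - d'(Y) = M\alpha_\Pi + (p(S_Y) - d(S_Y))$ and the careful use of both directions of the SSP correspondence as the concrete work to carry out.
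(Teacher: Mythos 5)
Your proposal is correct and follows essentially the same strategy as the paper's proof: lift $d$ to $d' = d \circ f^{-1}$ on $\U'_L$, observe $d' \geq 0$ so that \cref{lem:F-Sol-equivalence} reduces the follower's problem to $\sol(I_\Pi)$, and use the SSP correspondence together with the decomposition $p'(Y) = M\alpha_\Pi + p(S_Y)$ (and $d'(Y) = d(S_Y)$) to transfer the optimistic payoff from \eqref{eq:SAT-meta-thm} to \eqref{eq:pi-meta-thm}. The paper phrases this as a chain of equalities between the optimistic maxima, whereas you argue via an explicit witness $Y_0$ in $\argmax_{Y \in \sol(I_\Pi)}$ and conclude with the lower bound $d'(Y_0) \geq k^\star$; the two are interchangeable here, and your bookkeeping identity is exactly the content behind the paper's ``due to the cost structure of $p'$'' step.
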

\begin{proof}
In this case, due to \cref{thm:pricing-SAT}, there exists some $d^\star : \U_L \to \R$ that achieves a value of at least $k^\star$ in \eqref{eq:SAT-meta-thm}. 
Furthermore, we can assume w.l.o.g. that $d^\star \geq 0$, since the function $d^\star$ explicitly constructed in \cref{thm:pricing-SAT} has this property.
Let us define $d'^\star : \U'_L \to \R$ via $d'^\star(e) := d^\star(f^{-1}(e))$ for all $e \in \U'_L$.
Then, in particular, we have $d'^\star \geq 0$, thus, $d'^\star(Y_{d'^\star}) \geq 0$, so \cref{lem:F-Sol-equivalence} is applicable.
Let us say that the leader chooses to use the price function $d'^\star$ in \eqref{eq:pi-meta-thm}.
Then, since we consider the optimistic setting, the leader receives a payout of 
\begin{align*}
&\max\set{d'^\star(Y^\star) : Y^\star \in \argmax_{Y \in \F}\set{p'(Y) - d'^\star(Y)}}\\
 = &\max\set{d'^\star(Y^\star) : Y^\star \in \argmax_{Y \in \sol(I_\Pi)}\set{p'(Y) - d'^\star(Y)}}\\
 = &\max\set{d'^\star(Y^\star) : Y^\star \in \argmax_{Y \in \sol(I_\Pi)}\set{p(f^{-1}(Y)) - d'^\star(Y)}}\\
  = &\max\set{d^\star(X^\star) : X^\star \in \argmax_{X \in \sol(I_\text{\sat{}})}\set{p(X) - d^\star(X)}}\\
  \geq &\ k^\star.
\end{align*}
Here, the first line is by definition of \eqref{eq:pi-meta-thm}, the second line is by \cref{lem:F-Sol-equivalence},
the third line holds due to the cost structure of $p'$, and since $\alpha_\Pi M$ is a constant. The fourth line holds since $d'^\star(Y) = d^\star(f^{-1}(Y))$ and since, by the SSP property, the following two sets are equal:
\[
\set{f^{-1}(Y) : Y \in \sol(I_\Pi)} = \sol(I_\text{\sat{}})
\]
In total, we conclude that the optimal value of \eqref{eq:pi-meta-thm} is at least $k^\star$. \qed
\end{proof}

\begin{lemma}
\label{lem:meta-thm-only-if}
If \eqref{eq:pi-meta-thm} has value at least $k^\star$, then $\exists A \forall B \varphi(A,B)$ is satisfiable.
\end{lemma}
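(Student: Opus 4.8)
The plan is to reverse the chain of equalities from the proof of \cref{lem:meta-thm-if}. Suppose the leader has a price function $d' \colon \U'_L \to \R$ that achieves a payout of at least $k^\star$ in \eqref{eq:pi-meta-thm}; equivalently, the optimistic follower's response $Y_{d'}$ satisfies $d'(Y_{d'}) \geq k^\star$. Since $k^\star \in \Zgeq$, this already gives $d'(Y_{d'}) \geq 0$, so $d'$ is reasonable and \cref{lem:F-Sol-equivalence} is applicable. I would then define a candidate \sat{}-price function by pulling $d'$ back along $f$, namely $d^\star(e) := d'(f(e))$ for all $e \in \U_L$ (well-defined because $\U'_L = f(\U_L)$ and $f$ is injective). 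The goal is to show that $d^\star$ achieves a payout of at least $k^\star$ in \eqref{eq:SAT-meta-thm}; since, as recalled at the start of this section, \cref{thm:pricing-SAT} gives that this value is at least $k^\star$ if and only if $\exists A \forall B \varphi(A,B)$ is true, this finishes the proof.

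The core step is a correspondence between follower's solutions of the two instances. For $Y \in \sol(I_\Pi)$, set $X := f^{-1}(Y \cap f(\U_\text{\sat{}}))$. By the SSP property, $X \in \sol(I_\text{\sat{}})$, and conversely every $X \in \sol(I_\text{\sat{}})$ arises in this way from some $Y \in \sol(I_\Pi)$ with $Y \cap f(\U_\text{\sat{}}) = f(X)$; hence $Y \mapsto X$ is a surjection of $\sol(I_\Pi)$ onto $\sol(I_\text{\sat{}})$. Using $w(Y) = \alpha_\Pi$ for all $Y \in \sol(I_\Pi)$ together with the explicit definition of $p'$, I would compute $p'(Y) = \alpha_\Pi M + p(X)$. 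Moreover, since $d'$ is supported on $\U'_L = f(\U_L)$ and $f(\U_L) \subseteq f(\U_\text{\sat{}})$, we get $Y \cap \U'_L = f(X \cap \U_L)$, hence $d'(Y) = d^\star(X)$; in particular, $d'(Y)$ depends only on the image $X$. Combining these, $p'(Y) - d'(Y) = \alpha_\Pi M + \big(p(X) - d^\star(X)\big)$.

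Since $Y \mapsto X$ is surjective onto $\sol(I_\text{\sat{}})$ and only shifts the follower's objective by the constant $\alpha_\Pi M$, a set $Y \in \sol(I_\Pi)$ lies in $\argmax_{Y \in \sol(I_\Pi)}\set{p'(Y) - d'(Y)}$ precisely when its image $X$ lies in $\argmax_{X \in \sol(I_\text{\sat{}})}\set{p(X) - d^\star(X)}$, and for corresponding pairs the leader's payouts $d'(Y)$ and $d^\star(X)$ agree. This would give
\begin{align*}
&\max\set{d'(Y^\star) : Y^\star \in \argmax_{Y \in \sol(I_\Pi)}\set{p'(Y) - d'(Y)}}\\
 = &\max\set{d^\star(X^\star) : X^\star \in \argmax_{X \in \sol(I_\text{\sat{}})}\set{p(X) - d^\star(X)}}.
\end{align*}
By \cref{lem:F-Sol-equivalence}, the left-hand side equals the value of \eqref{eq:pi-meta-thm} under $d'$, hence is at least $k^\star$; therefore the right-hand side, which is the value of \eqref{eq:SAT-meta-thm} under $d^\star$, is also at least $k^\star$, which, as noted above, concludes the proof.

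I expect the main obstacle to be the careful handling of the optimistic tie-breaking. One must verify that $d'(Y)$ genuinely depends only on the image $X$, so that in passing between $\sol(I_\Pi)$ and $\sol(I_\text{\sat{}})$ neither the follower's maximizing choice nor the optimistic leader's maximizing choice can be improved by exploiting the extra elements $\U_\Pi \setminus f(\U_\text{\sat{}})$, and that surjectivity of $Y \mapsto X$ together with the constant shift $\alpha_\Pi M$ makes the two $\argmax$ sets correspond exactly. A smaller technical point is that \cref{lem:F-Sol-equivalence} --- the lemma letting us replace $\F$ by $\sol(I_\Pi)$ in the follower's problem --- may only be invoked after checking $d'(Y_{d'}) \geq 0$, which here holds for free since $d'(Y_{d'}) \geq k^\star \geq 0$.
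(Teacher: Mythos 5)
Your proposal is correct and takes essentially the same approach as the paper: pull $d'$ back along $f$ to define $d^\star$, observe that $d'(Y_{d'}) \geq k^\star \geq 0$ makes \cref{lem:F-Sol-equivalence} applicable, and reverse the chain of equalities from the proof of \cref{lem:meta-thm-if}. You spell out in more detail why the correspondence $Y \mapsto X := f^{-1}(Y \cap f(\U_\text{\sat{}}))$ is a well-defined surjection that shifts the follower's objective by the constant $\alpha_\Pi M$ while preserving the leader's payout, whereas the paper simply states that the earlier reasoning can be run in reverse, but the underlying argument is the same.
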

\begin{proof}
Let $d'^\star : \U'_L \to \R$ be a function that achieves value at least $k^\star$ in \eqref{eq:pi-meta-thm}. 
Let us define $d^\star : \U_L \to \R$ via $d^\star(e) := d'^\star(f(e))$ for all $e \in \U_L$.
We know that $d'^\star(Y_{d'^\star}) \geq k^\star \geq 0$, so $d'^\star$ is reasonable and \cref{lem:F-Sol-equivalence} is applicable. 
Then we can perform the reasoning of \cref{lem:meta-thm-if} in reverse. 
Specifically, assume that the leader chooses a cost function $d^\star$ in \eqref{eq:SAT-meta-thm}. 
Then they receive a payout of
\begin{align*}
  &\max\set{d^\star(X^\star) : X^\star \in \argmax_{X \in \sol(I_\text{\sat{}})}\set{p(X) - d^\star(X)}}\\
   = &\max\set{d'^\star(Y^\star) : Y^\star \in \argmax_{Y \in \sol(I_\Pi)}\set{p(f^{-1}(Y)) - d^\star(f^{-1}(Y))}}\\
    = &\max\set{d'^\star(Y^\star) : Y^\star \in \argmax_{Y \in \sol(I_\Pi)}\set{p'(Y) - d'^\star(Y)}}\\
    = &\max\set{d'^\star(Y^\star) : Y^\star \in \argmax_{Y \in \F}\set{p'(Y) - d'^\star(Y)}}\\
    \geq & \ k^\star.
\end{align*}
This implies that the optimal value of \eqref{eq:SAT-meta-thm} is at least $k^\star$. Therefore, $\exists A \forall B \varphi(A,B)$ is satisfiable. \qed
\end{proof}

In total, \cref{lem:meta-thm-if,lem:meta-thm-only-if} prove our main theorem, \cref{thm:meta-reduction-max}.

\begin{corollary}
\label{cor:d-well-behaved-max}
\cref{thm:meta-reduction-max} remains true, even if any of the additional constraints $0 \leq d$, $d \leq p$, or $0 \leq d \leq p$ 
is added in \eqref{eq:pricing-pi-max-2}.
\end{corollary}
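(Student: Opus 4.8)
The plan is to mirror the argument of \cref{cor:d-well-behaved-SAT}, checking that both directions of the reduction underlying \cref{thm:meta-reduction-max} survive the restriction. Recall that this reduction sends an instance $\exists A \forall B \varphi(A,B)$ to the \textsc{Pricing-$\Pi$} instance \eqref{eq:pi-meta-thm}, and the equivalence is established by \cref{lem:meta-thm-if,lem:meta-thm-only-if}. It therefore suffices to show that (i) when $\exists A \forall B \varphi(A,B)$ is satisfiable, one can pick a leader's function $d'$ that attains value at least $k^\star$ in \eqref{eq:pi-meta-thm} \emph{and} obeys $0 \le d' \le p'$, and (ii) the converse implication only becomes easier when the leader is more constrained.

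For (i), I would reuse the function constructed in the proof of \cref{lem:meta-thm-if}, namely $d'^\star(e) := d^\star(f^{-1}(e))$ for $e \in \U'_L = f(\U_L)$, where $d^\star$ is the leader's function from \cref{thm:pricing-SAT}. By \cref{cor:d-well-behaved-SAT}, we may take $d^\star$ to satisfy $0 \le d^\star \le p$ on $\U_L$. Since $\U'_L = f(\U_L) \subseteq f(\U_\text{\sat{}})$, every $e \in \U'_L$ has $p'(e) = M w(e) + p(f^{-1}(e)) \ge p(f^{-1}(e)) \ge d^\star(f^{-1}(e)) = d'^\star(e) \ge 0$, so $0 \le d'^\star \le p'$, which in particular implies each of the three restrictions. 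The remaining computation in \cref{lem:meta-thm-if} is unchanged, so \eqref{eq:pi-meta-thm} still has value at least $k^\star$ under any of the additional constraints.

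For (ii), note that adding any of $0 \le d'$, $d' \le p'$, or $0 \le d' \le p'$ only shrinks the leader's feasible set in \eqref{eq:pi-meta-thm}; hence, if the restricted problem has value at least $k^\star$, so does the unrestricted one, and \cref{lem:meta-thm-only-if} applies verbatim to conclude that $\exists A \forall B \varphi(A,B)$ is satisfiable. Combining (i) and (ii) gives a polynomial-time reduction from $\exists\forall$-DNF-\sat{} to the restricted version of \textsc{Pricing-$\Pi$}, establishing $\Sigma^p_2$-hardness in each case. The only point needing care is the inequality $p'(e) \ge p(f^{-1}(e))$ on $f(\U_\text{\sat{}})$, which is immediate from the definition of $p'$ and $w \ge 0$; no idea beyond those already in \cref{lem:meta-thm-if,lem:meta-thm-only-if,cor:d-well-behaved-SAT} is required, so this step is expected to be routine rather than an obstacle.
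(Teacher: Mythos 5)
Your proof is correct and follows essentially the same route as the paper's: reuse $d'^\star(e) = d^\star(f^{-1}(e))$ with $0 \le d^\star \le p$ from \cref{cor:d-well-behaved-SAT}, observe $p'(e) \ge p(f^{-1}(e))$ on $f(\U_\text{\sat{}})$ to get $0 \le d'^\star \le p'$, and note that the converse direction holds a fortiori under a stronger restriction on the leader. The only cosmetic difference is that you phrase the converse via monotonicity of the feasible set, whereas the paper says the reasoning of \cref{lem:meta-thm-only-if} goes through verbatim under the stronger assumption; these are the same observation.
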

\begin{proof}
We are now interested in the pricing problem \eqref{eq:pi-meta-thm} with the additional constraint $0 \leq d' \leq p'$.
Consider the reduction from $\exists \forall$-DNF-\sat{} to \textsc{Pricing-$\Pi$} performed in \cref{thm:meta-reduction-max}.
We check the reasoning of \cref{lem:meta-thm-if}.
If we assume that $\exists A \forall B \varphi(A,B)$ is satisfiable, then $d^\star$ satisfies $0 \leq d^\star \leq p$. 
Then, by the definition of $p'$ and by the definition of $d'^\star$ in \cref{lem:meta-thm-if}, we also have $0 \leq d'^\star \leq p'$.
On the other hand, if we start \cref{lem:meta-thm-only-if} with the stronger assumption of $0 \leq d'^\star \leq p'$, of course the reasoning of the lemma remains true.
In total, we have shown that the problem \textsc{Pricing-$\Pi$} remains hard, even if the constraint $0 \leq d' \leq p'$ is added to \eqref{eq:pi-meta-thm}.
The proof for the other choices of the additional constraint follows analogously.
\qed
\end{proof}

\begin{theorem}
\label{thm:meta-reduction-feas}
  For any problem $\Pi$ that is $\SSPNP$-complete (see \cref{sec:framework}), the problem \textsc{Feas-Pricing-$\Pi$} as in \eqref{eq:pricing-pi-feas} is $\Sigma^p_2$-hard.
\end{theorem}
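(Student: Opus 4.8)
The plan is to give a direct reduction from \textsc{Feas-Pricing-\sat{}} to \textsc{Feas-Pricing-$\Pi$}, mirroring the meta-reduction of \cref{thm:meta-reduction-max} but in streamlined form: here the follower already optimizes over $\sol(I)$ rather than over a larger set $\F$, so the ``big $M$'' machinery (the analogues of \cref{lem:follower-opt,lem:F-Sol-equivalence}) will not be needed at all. Concretely, starting from an instance $\exists A \forall B \varphi(A,B)$ of $\exists \forall$-DNF-\sat{}, I would apply \cref{thm:pricing-SAT} to obtain a \textsc{Feas-Pricing-\sat{}} instance $(I_\text{\sat{}}, \U_L, \U_F, p)$ together with a threshold $k^\star$ whose optimal value is at least $k^\star$ if and only if $\exists A \forall B \varphi(A,B)$ holds. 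Since $\Pi$ is $\SSPNP$-complete, there is an SSP reduction $\textsc{Satisfiability} \leqSSP \Pi$; instantiating it at $I_\text{\sat{}}$ yields, in polynomial time, an instance $I_\Pi$ of $\Pi$ and an injection $f \colon \U(I_\text{\sat{}}) \to \U(I_\Pi)$ with $\set{f(S) : S \in \sol(I_\text{\sat{}})} = \set{S' \cap f(\U(I_\text{\sat{}})) : S' \in \sol(I_\Pi)}$.

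Next I would define the target \textsc{Feas-Pricing-$\Pi$} instance by $\U'_L := f(\U_L)$, $\U'_F := f(\U_F) \cup (\U(I_\Pi) \setminus f(\U(I_\text{\sat{}})))$, and the profit function $p'(e) := p(f^{-1}(e))$ for $e \in f(\U(I_\text{\sat{}}))$ and $p'(e) := 0$ for every remaining element $e \in \U(I_\Pi) \setminus f(\U(I_\text{\sat{}}))$. The key observation is that for any $Y \in \sol(I_\Pi)$, writing $S := f^{-1}(Y \cap f(\U(I_\text{\sat{}}))) \in \sol(I_\text{\sat{}})$, one has $p'(Y) = p(S)$ (the extra elements carry zero profit), and, for any $d' \colon \U'_L \to \R$ with associated $d \colon \U_L \to \R$ given by $d(e) := d'(f(e))$, also $d'(Y) = d(S)$, because $Y \cap \U'_L = f(S \cap \U_L)$ by injectivity of $f$. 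Since the map $Y \mapsto S$ is a \emph{surjection} $\sol(I_\Pi) \to \sol(I_\text{\sat{}})$ with nonempty fibers, and since both the follower's objective $p'(Y) - d'(Y)$ and the leader's payout $d'(Y)$ depend on $Y$ only through $S$, the optimistic value of the \textsc{Feas-Pricing-$\Pi$} instance under the price function $d'$ equals the optimistic value of the \textsc{Feas-Pricing-\sat{}} instance under $d$. As $d \leftrightarrow d'$ is a bijection between $\R^{\U_L}$ and $\R^{\U'_L}$, the two pricing problems have exactly the same optimal value, so one is at least $k^\star$ if and only if the other is; invoking \cref{thm:pricing-SAT} then completes the reduction, and together with the containment \textsc{Feas-Pricing-$\Pi$} $\in \Sigma^p_2$ established earlier we even get $\Sigma^p_2$-completeness.

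I would also record two small sanity checks. First, $I_\Pi$ is a yes-instance, so $\sol(I_\Pi) \neq \emptyset$: the formula built in \cref{thm:pricing-SAT} is satisfiable (set $h_1$ true and all other variables false), hence $I_\text{\sat{}}$ is a yes-instance, and the instance map of an SSP reduction preserves yes-instances. Second, the construction keeps the problem bounded in the sense discussed after \eqref{eq:pricing-pi-max-2}: the ``$h_1$'' solution of \cref{thm:pricing-SAT} lies entirely in $\U_F$, so its $f$-image together with all the extra elements lies entirely in $\U'_F$, giving a feasible follower solution disjoint from $\U'_L$. One can moreover note, exactly as in \cref{cor:d-well-behaved-max}, that the hardness is robust to adding any of the constraints $0 \leq d'$, $d' \leq p'$, or $0 \leq d' \leq p'$, because the $d^\star$ produced by \cref{thm:pricing-SAT} already satisfies $0 \leq d^\star \leq p$ and this is inherited by its image $d'^\star$ under the bijection above.

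The part I expect to require the most care is checking that optimistic tie-breaking transfers correctly through the surjection $Y \mapsto S$: one must verify that the set of follower-optimal $Y$'s maps \emph{onto} (not merely into) the set of follower-optimal $S$'s — this is where surjectivity with nonempty fibers is genuinely used — and that the extra elements $\U(I_\Pi) \setminus f(\U(I_\text{\sat{}}))$ appearing in a follower-optimal $Y$ cannot influence either the follower's objective or the leader's payout. Both points follow from the choice $p' \equiv 0$ on those elements and the SSP identity, but they should be spelled out rather than waved through.
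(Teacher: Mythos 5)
Your proof is correct, but it takes a genuinely different route from the paper's. The paper proves \cref{thm:meta-reduction-feas} in three lines: it views the SSP problem $\Pi = (\I, \U, \sol)$ as a degenerate maximization LOP problem by setting $\F(I) := \sol(I)$, $w^{(I)} \equiv 0$, and $t^{(I)} := 0$; under this interpretation $\Pi$ is $\LOP$-complete and \eqref{eq:pricing-pi-feas} coincides with \eqref{eq:pricing-pi-max-2}, so the result is just \cref{thm:meta-reduction-max} applied as a black box. Your argument instead re-runs the meta-reduction directly for the feasibility setting, and your central observation --- that the big-$M$ scaffolding and \cref{lem:follower-opt,lem:F-Sol-equivalence} are only there to collapse $\F$ down to $\sol(I_\Pi)$, which is unnecessary when the follower already optimizes over $\sol(I)$ --- is exactly right. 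Indeed, when $w \equiv 0$ the profit function $p'(e) = Mw(e) + p(f^{-1}(e))$ from \cref{sec:meta-reduction} specializes to your $p'(e) = p(f^{-1}(e))$ (and $0$ outside $f(\U_\text{\sat{}})$), so your construction is the instantiation of the paper's at $w \equiv 0$; you have simply replaced the general lemmas with a clean direct argument via the surjection $Y \mapsto S$. The trade-off: the paper's proof is maximally economical, reusing the already-proved theorem, whereas yours is self-contained, makes explicit where the $M$-machinery is (and is not) needed, and is arguably more instructive; the price is that you must carefully verify (as you do in your last paragraph) that the optimistic tie-breaking survives the passage through a surjection with non-singleton fibers, a point that the paper's black-box appeal to \cref{thm:meta-reduction-max} hides inside the already-verified \cref{lem:F-Sol-equivalence}. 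One minor nitpick of phrasing: when establishing boundedness you write that ``its $f$-image together with all the extra elements lies entirely in $\U'_F$''; what you mean is that the $Y \in \sol(I_\Pi)$ guaranteed by the SSP property with $Y \cap f(\U_\text{\sat{}}) = f(X_{h_1})$ satisfies $Y \subseteq \U'_F$ because $f(X_{h_1}) \subseteq f(\U_F)$ and $Y \setminus f(\U_\text{\sat{}}) \subseteq \U_\Pi \setminus f(\U_\text{\sat{}}) \subseteq \U'_F$; the claim is correct but the wording could be read as asserting $Y$ contains all of $\U_\Pi \setminus f(\U_\text{\sat{}})$, which is not what the SSP property gives you.
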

\begin{proof}
We can interpret the SSP problem $\Pi = (\I, \U, \sol)$ trivially as a maximization LOP problem by defining, for each instance $I \in \I$, the sets $\F(I) := \sol(I)$ 
with weights $w^{(I)} : \U(I) \to \Z_{\geq 0}$, $w^{(I)}(e) = 0$ for all $e \in \U(I)$, and threshold $t^{(I)} = 0$. 
Note that then $\F(I) = \sol(I) = \set{F \in \F(I) : w^{(I)}(F) = 0}$. In this interpretation, the problem $\Pi$ is $\LOP$-complete. Furthermore, under this interpretation, the problem \eqref{eq:pricing-pi-max-2} becomes equal to \eqref{eq:pricing-pi-feas}.
Then the theorem follows as a consequence of \cref{thm:meta-reduction-max}.
\qed
\end{proof}
\section{Minimization Case}

We now describe the necessary modifications to adapt the main theorem, \cref{thm:meta-reduction-max}, to the minimization case. 
We again start the reduction from an instance $\exists A \forall B \varphi(A,B)$ of $\exists \forall$DNF-\sat{}. 
Using \cref{thm:pricing-SAT}, we compute in polynomial time
an instance of \textsc{Feas-Pricing-\sat{}} and a number $k^\star$ such that \eqref{eq:SAT-meta-thm-min} has value at least $k^\star$ if and only if $\exists A \forall B \varphi(A,B)$ is satisfiable.

\begin{equation} \label{eq:SAT-meta-thm-min}
  \begin{aligned}
    \max_{d \colon \U_L \to \R}\ & d(X^\star)\\
    \text{s.t. } & X^\star \in \argmax_{X \in \sol(I_\text{\sat{}})}\ p(X) - d(X). \\
  \end{aligned}
\end{equation}

Since $\Pi$ is $\LOP$-complete and a minimization problem, one can obtain in polynomial time an instance $I_\Pi$ of $\Pi$ together with a universe $\U_\Pi$, feasible solutions $\F$, weights
$w : \U_\Pi \to \Z_{\geq 0}$, a threshold $t \in \Z_{\geq 0}$, and an injective function $f : \U_\text{\sat{}} \to \U_\Pi$ such that the following SSP property holds:
\begin{align*}
\sol(I_\Pi) := \set{F \in \F &: w(F) \leq t} = \set{F \in \F : w(F) = t}\\
\text{ such that} \quad &\forall S' \in \sol(I_\Pi) : f^{-1}(S') \in \sol(I_\text{\sat{}})\\
\text{ and} \quad &\forall S \in \sol(I_\text{\sat{}}): \exists S' \in \sol(I_\Pi) : S' \cap f(\U_\text{\sat{}}) = f(S).
\end{align*}

For technical reasons, we would like to assume the following:
\begin{lemma}
\label{lem:w-at-least-one}
Without loss of generality, we have $w(e) \geq 1$ for all $e \in f(\U_\text{\sat{}})$. 
\end{lemma}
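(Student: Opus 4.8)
The plan is to leave the instance $I_\Pi$ and its feasible-solution family $\F$ completely untouched and to replace only the weight function $w$ and the threshold $t$ by a rescaled, slightly shifted pair $(w',t')$ that (i) satisfies $w'(e)\ge 1$ for every $e\in f(\U_\text{\sat{}})$, (ii) induces exactly the same solution set, i.e., $\set{F\in\F : w'(F)\le t'}=\sol(I_\Pi)$, and (iii) still witnesses $\LOP$-completeness, i.e., $\set{F\in\F : w'(F)\le t'-1}=\emptyset$. Since every later argument in the minimization proof will refer to the instance of $\Pi$ only through the tuple $(\U_\Pi,\F,w,t,f)$ --- exactly as in the maximization case, where $w$ served only to describe $\sol(I_\Pi)$ and to build the profit function --- and since $\F$ and $f$ are not altered, we may from that point on simply work with $w'$ and $t'$ in place of $w$ and $t$. (If $\Pi$ happens to be a cardinality problem then $w\equiv 1$ and there is nothing to show, so we may assume that some weights vanish.)

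To build $(w',t')$ I would first record a structural fact: every $S\in\sol(I_\text{\sat{}})$ has the same size $|S|=N$, where $N$ is the number of Boolean variables of the CNF formula $I_\text{\sat{}}$, since a satisfying subset contains exactly one of $\ell_i,\overline \ell_i$ per variable and $N=|\U_\text{\sat{}}|/2$ can be read off directly. Combining this with the SSP property $\set{F\cap f(\U_\text{\sat{}}) : F\in\sol(I_\Pi)}=\set{f(S):S\in\sol(I_\text{\sat{}})}$ and the injectivity of $f$ yields $|F\cap f(\U_\text{\sat{}})|=N=:\nu$ for \emph{every} $F\in\sol(I_\Pi)$. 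I then set $K:=|\U_\Pi|+1$ and define $w'(e):=K\,w(e)+1$ for $e\in f(\U_\text{\sat{}})$, $w'(e):=K\,w(e)$ for $e\in\U_\Pi\setminus f(\U_\text{\sat{}})$, and $t':=Kt+\nu$; this is computable in polynomial time, has polynomially bounded non-negative integer values, and (i) is immediate.

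For (ii) and (iii) I would split into three cases for $F\in\F$ according to the old weight $w(F)$: if $w(F)\le t-1$, no such $F$ exists, by $\LOP$-completeness of the original reduction; if $w(F)=t$, i.e., $F\in\sol(I_\Pi)$, then $w'(F)=K\,w(F)+|F\cap f(\U_\text{\sat{}})|=Kt+\nu=t'$ by the structural fact; and if $w(F)\ge t+1$, then $w'(F)\ge K(t+1)=Kt+K>Kt+\nu=t'$ using $\nu\le|\U_\Pi|<K$. This gives $\set{F\in\F : w'(F)\le t'}=\set{F\in\F : w(F)=t}=\sol(I_\Pi)=\set{F\in\F : w'(F)=t'}$ and $\set{F\in\F : w'(F)\le t'-1}=\emptyset$, hence (ii) and (iii). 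The one step I expect to need the most care is exactly this case analysis: a bare shift of $+1$ on $f(\U_\text{\sat{}})$ without the scaling would risk dragging feasible solutions of old weight $t+1$ below the new threshold, and it is precisely the blow-up by $K>|\U_\Pi|$ that makes the shift negligible next to unit differences of $w$.
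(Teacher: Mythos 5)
Your proof is correct and takes essentially the same approach as the paper: rescale $w$ by a large constant $K$, add $1$ to the weight of each element of $f(\U_\text{\sat{}})$, and shift the threshold by the (constant) number of elements of $f(\U_\text{\sat{}})$ that every solution must contain. The only cosmetic differences are your choice $K=|\U_\Pi|+1$ rather than the paper's $K=|\U_\text{\sat{}}|+1$ (both suffice, since what matters is $K>\nu$), and your more explicit three-way case analysis in place of the paper's terser verification.
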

\begin{proof}
Suppose that this is not the case. Then let us define an alternative weight function $w' : \U_\Pi \to \Z_{\geq 0}$ as follows.
Set $n :=  |\U_\text{\sat{}}|$ and $K := n+1$. Then let $w'(e) := Kw(e) + 1$ for all $e \in f(\U_\text{\sat{}})$ and $w'(e) := Kw(e)$ for all other $e \in \U_\Pi \setminus f(\U_\text{\sat{}})$.
Furthermore, define a new threshold $t' := Kt + n/2$ (note that $n$ is even since $\U_\text{\sat{}}$ consists of positive and negative literals).
Clearly, $w'(e) \geq 1$ holds for all $e \in f(\U_\text{\sat{}})$. 
Now, for all $F \in \F$, we have that $w'(F) \leq t'$ implies $F \in \sol(I_\Pi)$ by the choice of $K$. 
Furthermore, for all $F \in \sol(I_\Pi)$, we have $|F \cap f(U_\text{\sat{}})| = n/2$ due to the properties of the SSP reduction, and $|S| = n/2$ for all $S \in \sol(I_\text{\sat{}})$. 
Then we have the equivalence $w(F) \leq t \Leftrightarrow w(F) = t \Leftrightarrow w'(F) \leq t' \Leftrightarrow w'(F) = t'$. Hence, $w'$ and $t'$ still have the properties required for an SSP reduction. \qed
\end{proof}

Just like in \cref{sec:meta-reduction}, we now consider the problem $\textsc{Pricing-$\Pi$}$ with the underlying instance $I_\Pi$, 
the leader's universe $\U'_L := f(\U_L)$, and the follower's universe $\U'_F := f(\U_F) \cup (\U_\Pi \setminus f(\U_\text{\sat{}}))$. 
Let again $n := |\U_\text{\sat{}}|$ and $M := 4n \sum_{e \in \U_\text{\sat{}}} c(e)$. Note that $M \geq 0$.
The cost function $c' : \U'_L \cup \U'_F \to \Zgeq$ is defined as before, but with the difference that we subtract $p(f^{-1}(e))$ instead of adding it:
\[
c'(e) := 
\begin{cases}
Mw(e) - p(f^{-1}(e)) &\text{ if }e \in f(\U_\text{\sat{}})\\
Mw(e)				&\text{ if }e \in \U_\Pi \setminus f(\U_\text{\sat{}})\\
\end{cases}
\]
This gives rise to the following instance of $\textsc{Pricing-$\Pi$}$:
\begin{equation} \label{eq:pi-meta-thm-min}
  \begin{aligned}
    \max_{d' \colon \U'_L \to \R}\ & d'(Y^\star)\\
    \text{s.t. } & Y^\star \in \argmin_{Y \in \F}\ c'(Y) + d'(Y) \\
  \end{aligned}
\end{equation}

Note that $c' \geq 0$, due to our assumption in \cref{lem:w-at-least-one}. Therefore, this is a well-defined instance of $\textsc{Pricing-$\Pi$}$. 
In the remainder of this section, we show that $\exists A \forall B \varphi(A, B)$ is true if and only if both \eqref{eq:SAT-meta-thm-min} and \eqref{eq:pi-meta-thm-min} have value at least~$k^\star$.
Notice the following subtle difference: The follower in \eqref{eq:SAT-meta-thm-min} solves a maximization problem, while the follower in \eqref{eq:pi-meta-thm-min} solves a minimization problem.
Again, let $\alpha_\Pi$ denote the optimal value of the minimization instance $I_\Pi$, i.e.,
\[
\alpha_\Pi := \min_{F \in \F} w(F),
\]
which is equal to $w(Y)$ for all $Y \in \sol(I_\Pi)$. Like in \cref{sec:meta-reduction}, for all $d' : \U'_L \to \R$, we define the set $Y_{d'} \in \F$ to be a best follower's response to $d'$, 
chosen optimistically, i.e., a set that minimizes $c'(Y) + d'(Y)$ over $\F$, and among all these sets one that maximizes $d'(Y)$.
We then obtain the following lemmas, which are analogous to the ones in \cref{sec:meta-reduction}.

\begin{lemma}
\label{lem:follower-opt-min} 
No matter which $d'$ the leader selects in \eqref{eq:pi-meta-thm-min}, the follower's optimum is always at most $\alpha_\Pi M$.
\end{lemma}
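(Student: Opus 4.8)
The plan is to mirror the proof of \cref{lem:follower-opt}, adapting it to the fact that the follower in \eqref{eq:pi-meta-thm-min} \emph{minimizes} $c'(Y)+d'(Y)$ over $\F$; hence, to \emph{upper}-bound the follower's optimum, it suffices to exhibit one cheap follower response, rather than to reason about all of $\F$. Fix an arbitrary leader choice $d':\U'_L\to\R$. As in \cref{lem:follower-opt}, I would first invoke the standing boundedness assumption for \eqref{eq:SAT-meta-thm-min}, which yields some $X\in\sol(I_\text{\sat{}})$ with $X\subseteq\U_F$; this is automatic here, since the \sat{} instance produced by \cref{thm:pricing-SAT} always has such a solution (for instance the one that sets $h_1$ true, forcing all other literals negative, which lies in $\U_F$ since $\U_L$ consists only of positive literals).

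Next I would apply the SSP property of the $\LOP$-reduction: from $X\in\sol(I_\text{\sat{}})$ we obtain some $Y\in\sol(I_\Pi)$ with $Y\cap f(\U_\text{\sat{}})=f(X)$. The key observation, read off \cref{fig:schematic-meta-reduction}, is that this $Y$ lies entirely in $\U'_F$: the part of $Y$ inside $f(\U_\text{\sat{}})$ is $f(X)\subseteq f(\U_F)\subseteq\U'_F$ because $X\subseteq\U_F$, and the part of $Y$ outside $f(\U_\text{\sat{}})$ is contained in $\U_\Pi\setminus f(\U_\text{\sat{}})\subseteq\U'_F$ by definition of $\U'_F$. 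Hence $Y\cap\U'_L=\emptyset$, so $d'(Y)=0$ no matter what the leader does.

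Finally I would evaluate $c'(Y)$. Splitting $Y$ according to membership in $f(\U_\text{\sat{}})$ and plugging in the definition of $c'$ gives $c'(Y)=Mw(Y)-p\bigl(f^{-1}(Y\cap f(\U_\text{\sat{}}))\bigr)=Mw(Y)-p(X)$. Since $p\geq 0$ and $w(Y)=\alpha_\Pi$ for every $Y\in\sol(I_\Pi)$, this gives $c'(Y)\leq\alpha_\Pi M$. Therefore the follower has a feasible response $Y\in\F$ of value $c'(Y)+d'(Y)=c'(Y)\leq\alpha_\Pi M$, so the follower's optimum --- a minimum --- is at most $\alpha_\Pi M$, as claimed.

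I do not expect a real obstacle; the only delicate point is the sign flip in the definition of $c'$ (it subtracts $p(f^{-1}(e))$ rather than adding it), which is precisely what makes the witness $Y$ cost at most $\alpha_\Pi M$ rather than exactly $\alpha_\Pi M$, and which is also why one needs the nonnegativity of $c'$ guaranteed by \cref{lem:w-at-least-one}. Everything else is the same universe bookkeeping as in \cref{lem:follower-opt}.
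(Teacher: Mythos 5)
Your proof is correct and takes essentially the same approach as the paper's (very terse) proof: exhibit a feasible follower response $Y\in\sol(I_\Pi)$ with $Y\subseteq\U'_F$ obtained via the SSP property from a leader-free \sat{} solution, note $d'(Y)=0$, and bound $c'(Y)\leq \alpha_\Pi M$ using $w(Y)=\alpha_\Pi$ and $p\geq 0$. You have merely spelled out the bookkeeping that the paper compresses into one line by reference to Lemma~6.
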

\begin{proof}
We argue analogously to \cref{lem:follower-opt}. W.l.o.g., there is a set $S \in \sol(I_\text{\sat{}})$ with $S \subseteq \U_F$.
Hence, there is some $Y \in \sol(I_\Pi)$ with $Y \subseteq \U'_F$ and $c'(Y) + d'(Y) = c'(Y) \leq \alpha_\Pi M$.
\qed
\end{proof}

\begin{lemma} 
\label{lem:F-Sol-equivalence-min}
If we assume that $d'(Y_{d'}) \geq 0$, then the following holds:
\begin{equation} \label{eq:F-Sol-Equivalence-min}
\begin{aligned}
&\max\set{d'(Y^\star) : Y^\star \in \argmin_{Y \in \F}\set{c'(Y) + d'(Y)}} \\
= &\max\set{d'(Y^\star) : Y^\star \in \argmin_{Y \in \sol(I_\Pi)}\set{c'(Y) + d'(Y)}}.
\end{aligned}
\end{equation}
\end{lemma}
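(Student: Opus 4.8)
The plan is to run the same argument as in \cref{lem:F-Sol-equivalence}, but with every maximization over the follower's problem replaced by a minimization and the inequalities reversed accordingly. Assume $d'(Y_{d'}) \geq 0$. As in the maximization case, the crux is to establish
\[
\min_{Y \in \F}\set{c'(Y) + d'(Y)} \;=\; \min_{Y \in \sol(I_\Pi)}\set{c'(Y) + d'(Y)} ,
\]
and since $\sol(I_\Pi) \subseteq \F$ and $Y_{d'}$ attains the left-hand minimum by definition, this reduces to showing $Y_{d'} \in \sol(I_\Pi)$.

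To prove $Y_{d'} \in \sol(I_\Pi)$, I would argue by contradiction: suppose $Y_{d'} \in \F \setminus \sol(I_\Pi)$. Because $\sol(I_\Pi) \neq \emptyset$ (we arranged that there is a feasible follower's set inside $\U_F$) and $\sol(I_\Pi) = \set{F \in \F : w(F) = t}$, every member of $\sol(I_\Pi)$ has weight $t = \alpha_\Pi$, whereas $w(Y_{d'}) > t$; by integrality of $w$ this forces $w(Y_{d'}) \geq \alpha_\Pi + 1$. Using $d'(Y_{d'}) \geq 0$ and the definition of $c'$, which subtracts from $M\,w(\cdot)$ a quantity of at most $p(\U_\text{\sat{}})$, we obtain
\[
c'(Y_{d'}) + d'(Y_{d'}) \;\geq\; c'(Y_{d'}) \;\geq\; M\,w(Y_{d'}) - p(\U_\text{\sat{}}) \;\geq\; (\alpha_\Pi + 1)M - p(\U_\text{\sat{}}) \;>\; \alpha_\Pi M,
\]
where the final step uses that $M$ was chosen very large, in particular $M > p(\U_\text{\sat{}})$. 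Since $c'(Y_{d'}) + d'(Y_{d'})$ equals the follower's optimum for $d'$, this contradicts \cref{lem:follower-opt-min}. Hence $Y_{d'} \in \sol(I_\Pi)$ and the equality of minima follows.

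Finally, the equality of the two minima gives $\argmin_{Y \in \F}\set{c'(Y)+d'(Y)} \supseteq \argmin_{Y \in \sol(I_\Pi)}\set{c'(Y)+d'(Y)}$; $Y_{d'}$ is a maximizer of $d'(\cdot)$ over the left-hand $\argmin$ set and, as just shown, lies in $\sol(I_\Pi)$, so it is also a maximizer of $d'(\cdot)$ over the right-hand $\argmin$ set, and the set inclusion gives the reverse inequality, establishing \eqref{eq:F-Sol-Equivalence-min}. The only delicate point I anticipate is the constant bookkeeping in the chain of inequalities above, and specifically the fact that the weight gap between $\sol(I_\Pi)$ and $\F \setminus \sol(I_\Pi)$ is at least one full unit: this is precisely where the integrality of $w$, the identity $\alpha_\Pi = t$ (which follows from $\LOP$-completeness together with $\sol(I_\Pi) \neq \emptyset$), and \cref{lem:w-at-least-one} (ensuring $c' \geq 0$, so that \eqref{eq:pi-meta-thm-min} is a well-defined instance) all enter.
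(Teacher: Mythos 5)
Your proof is correct and takes essentially the same approach as the paper: both reduce the claim to showing $Y_{d'} \in \sol(I_\Pi)$, derive a contradiction with \cref{lem:follower-opt-min} via the lower bound $c'(Y_{d'}) + d'(Y_{d'}) > \alpha_\Pi M$ when $w(Y_{d'}) > \alpha_\Pi$, and conclude by the argmin-inclusion argument mirroring \cref{lem:F-Sol-equivalence}. You merely spell out the integrality step and the final argmax bookkeeping that the paper leaves implicit by writing ``the rest of the proof is analogous.''
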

\begin{proof}
The proof is very similar to the proof of \cref{lem:F-Sol-equivalence}. 
We first claim that $Y_{d'} \in \sol(I_\Pi)$. Indeed, assume the opposite, then $w(Y_{d'}) > \alpha_\Pi$, and
\[
c'(Y_{d'}) + d'(Y_{d'}) > (\alpha_\Pi + 1)M - c(\U_\text{\sat{}}) + 0 > \alpha_\Pi M.
\]
This is a contradiction to \cref{lem:follower-opt-min}, which states that $c'(Y_{d'}) + d'(Y_{d'}) \leq \alpha_\Pi M$.
The rest of the proof is analogous. \qed
\end{proof}

\begin{lemma}
\label{lem:meta-thm-if-min}
If $\exists A \forall B \varphi(A,B)$ is satisfiable, then \eqref{eq:pi-meta-thm-min} has value at least $k^\star$.
\end{lemma}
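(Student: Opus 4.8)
The plan is to mirror the proof of \cref{lem:meta-thm-if} line by line, the only subtlety being the sign flip in the definition of $c'$. First I would invoke \cref{thm:pricing-SAT} (in the form of \eqref{eq:SAT-meta-thm-min}) to obtain a price function $d^\star : \U_L \to \R$ that achieves value at least $k^\star$ in \eqref{eq:SAT-meta-thm-min}, and recall that the function $d^\star$ explicitly constructed in \cref{thm:pricing-SAT} satisfies $d^\star \geq 0$. I then lift it to the $\Pi$-side by setting $d'^\star(e) := d^\star(f^{-1}(e))$ for all $e \in \U'_L = f(\U_L)$, which again gives $d'^\star \geq 0$; in particular $d'^\star(Y_{d'^\star}) \geq 0$, so \cref{lem:F-Sol-equivalence-min} is applicable and we may assume the follower optimizes only over $\sol(I_\Pi)$ rather than over all of $\F$.

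The core computation is to unwind the follower's objective on $\sol(I_\Pi)$. For every $Y \in \sol(I_\Pi)$ we have $w(Y) = \alpha_\Pi$, hence by the definition of $c'$,
\[
c'(Y) + d'^\star(Y) = M\alpha_\Pi - p(f^{-1}(Y)) + d'^\star(Y),
\]
where $f^{-1}(Y)$ abbreviates $f^{-1}(Y \cap f(\U_\text{\sat{}}))$. Since $M\alpha_\Pi$ is a constant, minimizing this over $\sol(I_\Pi)$ is the same as \emph{maximizing} $p(f^{-1}(Y)) - d'^\star(Y)$ over $\sol(I_\Pi)$; using $d'^\star(Y) = d^\star(f^{-1}(Y))$ and the SSP identity $\set{f^{-1}(Y) : Y \in \sol(I_\Pi)} = \sol(I_\text{\sat{}})$, this is exactly the follower's problem in \eqref{eq:SAT-meta-thm-min}. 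Chaining these observations yields
\begin{align*}
&\max\set{d'^\star(Y^\star) : Y^\star \in \argmin_{Y \in \F}\set{c'(Y) + d'^\star(Y)}}\\
= &\max\set{d'^\star(Y^\star) : Y^\star \in \argmin_{Y \in \sol(I_\Pi)}\set{c'(Y) + d'^\star(Y)}}\\
= &\max\set{d^\star(X^\star) : X^\star \in \argmax_{X \in \sol(I_\text{\sat{}})}\set{p(X) - d^\star(X)}}\\
\geq &\ k^\star,
\end{align*}
where the first step is \cref{lem:F-Sol-equivalence-min} and the last step is the choice of $d^\star$. By the optimistic assumption, this shows the leader can secure a payout of at least $k^\star$ in \eqref{eq:pi-meta-thm-min}, as desired.

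The only real obstacle compared with \cref{lem:meta-thm-if} is keeping the direction of optimization straight: the follower in \eqref{eq:pi-meta-thm-min} minimizes while the follower in \eqref{eq:SAT-meta-thm-min} maximizes, and it is precisely the sign flip $-p(f^{-1}(e))$ in $c'$ (rather than $+p(f^{-1}(e))$ in $p'$) that converts one into the other after the additive constant $M\alpha_\Pi$ is discarded. Two minor points should also be checked: that $c' \geq 0$, so that \eqref{eq:pi-meta-thm-min} is a legitimate \textsc{Pricing-$\Pi$} instance (this is exactly what \cref{lem:w-at-least-one} guarantees), and that the tie-break "among all follower-optimal $Y$ pick one maximizing $d'$" on the $\Pi$-side corresponds under $f^{-1}$ to the analogous tie-break on the \sat{}-side; both are immediate from the bijection $\set{f^{-1}(Y) : Y \in \sol(I_\Pi)} = \sol(I_\text{\sat{}})$ together with $d'^\star(Y) = d^\star(f^{-1}(Y))$.
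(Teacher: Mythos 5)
Your proof is correct and follows the paper's argument essentially line by line: invoke \cref{thm:pricing-SAT} to obtain $d^\star \geq 0$, lift it to $d'^\star$, apply \cref{lem:F-Sol-equivalence-min}, and then unwind the chain of equalities using $w(Y)=\alpha_\Pi$ on $\sol(I_\Pi)$, the cost structure of $c'$, the sign flip $\argmin g = \argmax(-g)$, and the SSP identity. The only difference from the paper is that you compress the middle three equalities into one step and add two brief sanity checks ($c' \geq 0$ via \cref{lem:w-at-least-one}, and the tie-break correspondence), neither of which changes the substance.
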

\begin{proof}
Since $\exists A \forall B \varphi(A,B)$ is satisfiable, there exists some $d^\star : \U_L \to \R$ that achieves a value of at least $k^\star$ in \eqref{eq:SAT-meta-thm-min}. 
Let us define $d'^\star : \U'_L \to \R$ via $d'^\star(e) := d^\star(f^{-1}(e))$ for all $e \in \U'_L$.
Again, we can assume w.l.o.g. that $d^\star \ge 0$ by \cref{thm:pricing-SAT}.
Then, in particular, we have $d'^\star \ge 0$, thus, $d'^\star(Y_{d'^\star}) \geq 0$, so \cref{lem:F-Sol-equivalence-min} is applicable.
Let us say that the leader chooses to use the price function $d'^\star$ in \eqref{eq:pi-meta-thm-min}. Then, since we consider the optimistic setting, the leader a receives payout~of
\begin{align*}
&\max\set{d'^\star(Y^\star) : Y^\star \in \argmin_{Y \in \F}\set{c'(Y) + d'^\star(Y)}}\\
 = &\max\set{d'^\star(Y^\star) : Y^\star \in \argmin_{Y \in \sol(I_\Pi)}\set{c'(Y) + d'^\star(Y)}}\\
 = &\max\set{d'^\star(Y^\star) : Y^\star \in \argmin_{Y \in \sol(I_\Pi)}\set{-p(f^{-1}(Y)) + d'^\star(Y)}}\\
 = &\max\set{d'^\star(Y^\star) : Y^\star \in \argmax_{Y \in \sol(I_\Pi)}\set{p(f^{-1}(Y)) - d^\star(f^{-1}(Y))}}\\
  = &\max\set{d^\star(X^\star) : X^\star \in \argmax_{X \in \sol(I_\text{\sat{}})}\set{p(X) - d^\star(X)}}\\
  \geq &\ k^\star.
\end{align*}
Here, the first line is by definition of \eqref{eq:pi-meta-thm-min}, the second line is by \cref{lem:F-Sol-equivalence-min}, 
the third line holds due to the cost structure of $c'$, and since $\alpha_\Pi M$ is a constant. The fourth line holds since $d'^\star(Y) = d^\star(f^{-1}(Y))$, and since $\argmin_x g(x) = \argmax_x -g(x)$.
The fifth line follows since, by the SSP property, the two following two sets are equal:
\[
\set{f^{-1}(Y) : Y \in \sol(I_\Pi)} = \sol(I_\text{\sat{}}).
\]
In total, we conclude that the optimal value of \eqref{eq:pi-meta-thm-min} is at least $k^\star$. \qed
\end{proof}

\begin{lemma}
\label{lem:meta-thm-only-if-min}
If \eqref{eq:pi-meta-thm-min} has value at least $k^\star$, then $\exists A \forall B \varphi(A,B)$ is satisfiable.
\end{lemma}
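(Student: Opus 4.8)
The plan is to mirror the proof of \cref{lem:meta-thm-only-if}, running the equality chain of \cref{lem:meta-thm-if-min} in reverse. Concretely, I would start from a function $d'^\star : \U'_L \to \R$ that achieves value at least $k^\star$ in \eqref{eq:pi-meta-thm-min}, and pull it back to the \sat{} instance by defining $d^\star : \U_L \to \R$ via $d^\star(e) := d'^\star(f(e))$ for all $e \in \U_L$; this is well-defined since $f$ restricted to $\U_L$ is a bijection onto $\U'_L = f(\U_L)$.

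Next I would check that \cref{lem:F-Sol-equivalence-min} is applicable: since $d'^\star$ achieves value at least $k^\star$, the optimistic follower's response satisfies $d'(Y_{d'^\star}) \geq k^\star \geq 0$, so $d'^\star$ is reasonable, and moreover the proof of \cref{lem:F-Sol-equivalence-min} guarantees $Y_{d'^\star} \in \sol(I_\Pi)$, which is what makes $f^{-1}(Y_{d'^\star})$ meaningful. Then I would run the displayed chain of \cref{lem:meta-thm-if-min} from bottom to top: the leader's payout when using $d^\star$ in \eqref{eq:SAT-meta-thm-min} equals the payout when using $d'^\star$ in \eqref{eq:pi-meta-thm-min}. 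The key steps are (i) \cref{lem:F-Sol-equivalence-min} to replace $\F$ by $\sol(I_\Pi)$; (ii) the cost structure $c'(e) = Mw(e) - p(f^{-1}(e))$ on $f(\U_\text{\sat{}})$ together with $w(Y) = \alpha_\Pi$ being constant on $\sol(I_\Pi)$, which lets us discard the additive constant $\alpha_\Pi M$ and rewrite the follower's objective as $-p(f^{-1}(Y)) + d'^\star(Y)$; (iii) the identity $\argmin_x g(x) = \argmax_x(-g(x))$ to flip the follower from minimization to maximization; and (iv) the SSP property $\set{f^{-1}(Y) : Y \in \sol(I_\Pi)} = \sol(I_\text{\sat{}})$ together with $d^\star = d'^\star \circ f$ to transfer the optimization to the \sat{} universe. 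Since the payout is at least $k^\star$, problem \eqref{eq:SAT-meta-thm-min} has value at least $k^\star$, and \cref{thm:pricing-SAT} then yields that $\exists A \forall B \varphi(A,B)$ is satisfiable.

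The main obstacle, exactly as in the maximization case, is ensuring that every link in the chain is a genuine equality rather than an inequality, so that it can be read in either direction; this hinges on \cref{lem:F-Sol-equivalence-min}, and hence on $d'^\star$ being reasonable — which is precisely what the hypothesis ``value at least $k^\star \geq 0$'' buys us. The min/max flip in step (iii) deserves a moment of care, since it must be applied simultaneously to the $\argmin$ set and to the outer maximization of $d'^\star$ over it; but because $-(c'(Y) + d'^\star(Y))$ and $p(f^{-1}(Y)) - d^\star(f^{-1}(Y))$ differ by the constant $\alpha_\Pi M$ on $\sol(I_\Pi)$, the two $\argmax$ sets coincide and the outer maximum is unchanged, so no subtlety survives.
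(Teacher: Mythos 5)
Your proof is correct and follows essentially the same route as the paper's: pull back $d'^\star$ to $d^\star = d'^\star \circ f$, verify reasonableness from $d'^\star(Y_{d'^\star}) \geq k^\star \geq 0$ so \cref{lem:F-Sol-equivalence-min} applies, and then read the equality chain of \cref{lem:meta-thm-if-min} in reverse. Your additional annotations (bijectivity of $f|_{\U_L}$, that each link must be a genuine equality, the argmin/argmax flip and the discarding of the constant $\alpha_\Pi M$) are all correct and merely make explicit what the paper states more tersely.
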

\begin{proof}
Let $d'^\star : \U'_L \to \R$ be a function that achieves value at least $k^\star$ in \eqref{eq:pi-meta-thm-min}. 
Let us define $d^\star : \U_L \to \R$ via $d^\star(e) := d'^\star(f(e))$ for all $e \in \U_L$.
We know that $d'^\star(Y_{d'^\star}) \geq k^\star \geq 0$, and so \cref{lem:F-Sol-equivalence-min} is applicable. 
Then we can perform the reasoning of \cref{lem:meta-thm-if-min} in reverse. 
Specifically, assume that the leader chooses cost function $d^\star$ in \eqref{eq:SAT-meta-thm-min}. 
Then they receive a payout of
\begin{align*}
  &\max\set{d^\star(X^\star) : X^\star \in \argmax_{X \in \sol(I_\text{\sat{}})}\set{p(X) - d^\star(X)}}\\
   = &\max\set{d'^\star(Y^\star) : Y^\star \in \argmax_{Y \in \sol(I_\Pi)}\set{p(f^{-1}(Y)) - d^\star(f^{-1}(Y))}}\\
    = &\max\set{d'^\star(Y^\star) : Y^\star \in \argmin_{Y \in \sol(I_\Pi)}\set{-p(f^{-1}(Y)) + d'^\star(Y)}}\\
    = &\max\set{d'^\star(Y^\star) : Y^\star \in \argmin_{Y \in \sol(I_\Pi)}\set{c'(Y) + d'^\star(Y)}}\\
    = &\max\set{d'^\star(Y^\star) : Y^\star \in \argmin_{Y \in \F}\set{c'(Y) + d'^\star(Y)}}\\
    \geq & \ k^\star.
\end{align*}
This implies that the optimal value of \eqref{eq:SAT-meta-thm-min} is at least $k^\star$. Therefore, $\exists A \forall B \varphi(A,B)$ is satisfiable. \qed
\end{proof}

In conclusion, we obtain:
\begin{theorem}
\label{thm:meta-reduction-min}
For any problem $\Pi$ that is an $\LOP$-complete minimization LOP problem (see \cref{sec:framework}), the problem \textsc{Pricing-$\Pi$} as in \eqref{eq:pricing-pi-min} is $\Sigma^p_2$-hard.
\end{theorem}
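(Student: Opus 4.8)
The plan is to assemble the pieces already developed above: the theorem follows immediately by combining \cref{lem:meta-thm-if-min} with \cref{lem:meta-thm-only-if-min}. Concretely, given an arbitrary instance $\exists A \forall B \varphi(A,B)$ of the $\Sigma^p_2$-complete problem $\exists \forall$-DNF-\sat{}, I would first apply \cref{thm:pricing-SAT} to compute in polynomial time the associated \textsc{Feas-Pricing-\sat{}} instance $(I_\text{\sat{}}, \U_L, \U_F, p)$ and the threshold $k^\star$, and then exploit the $\LOP$-completeness of $\Pi$ to compute the instance $I_\Pi$ with its universe $\U_\Pi$, feasible set $\F$, weights $w$, threshold $t$, and injection $f \colon \U_\text{\sat{}} \to \U_\Pi$ satisfying the stated SSP property. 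After invoking the normalization \cref{lem:w-at-least-one} to ensure $w(e) \geq 1$ on $f(\U_\text{\sat{}})$, I would build the partition $\U'_L := f(\U_L)$, $\U'_F := f(\U_F) \cup (\U_\Pi \setminus f(\U_\text{\sat{}}))$, the large constant $M$, and the cost function $c'$ exactly as specified above, obtaining the \textsc{Pricing-$\Pi$} instance \eqref{eq:pi-meta-thm-min}. Note that $c' \geq 0$ (this is precisely the reason for \cref{lem:w-at-least-one}), so \eqref{eq:pi-meta-thm-min} is a legitimate instance, and all of these objects have polynomial size and are produced in polynomial time.

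The heart of the matter is the equivalence that $\exists A \forall B \varphi(A,B)$ is satisfiable if and only if \eqref{eq:pi-meta-thm-min} has optimal value at least $k^\star$. The forward direction is \cref{lem:meta-thm-if-min} and the reverse direction is \cref{lem:meta-thm-only-if-min}; together they certify that the map sending $\exists A \forall B \varphi(A,B)$ to the instance \eqref{eq:pi-meta-thm-min} with target threshold $k^\star$ is a correct polynomial-time many-one reduction. Since $\exists \forall$-DNF-\sat{} is $\Sigma^p_2$-complete, this shows that the decision version of \textsc{Pricing-$\Pi$} is $\Sigma^p_2$-hard, which is the claim.

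The one subtlety --- and the only genuine difference from \cref{thm:meta-reduction-max} --- is the sign bookkeeping buried inside those two lemmas: the follower in \eqref{eq:SAT-meta-thm-min} solves a \emph{maximization} problem whereas the follower in \eqref{eq:pi-meta-thm-min} solves a \emph{minimization} problem, so the chain of equalities relating the two payouts must pass through $\argmin_x g(x) = \argmax_x (-g(x))$ and absorb the additive constant $\alpha_\Pi M$. Making sure this does not wreck the reduction is exactly what forces the preparatory steps: \cref{lem:w-at-least-one} keeps $c'$ nonnegative, and choosing $M$ large enough lets \cref{lem:F-Sol-equivalence-min} restrict the follower from $\F$ to $\sol(I_\Pi)$ whenever the leader behaves reasonably. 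Since all of these lemmas are already established, the remaining work for the theorem itself is purely the routine act of chaining them, which I expect to present no obstacle. Finally, in analogy with \cref{cor:d-well-behaved-max}, I would remark that the same reduction still proves $\Sigma^p_2$-hardness under the additional restrictions $0 \leq d'$, $d' \leq c'$, or $0 \leq d' \leq c'$, because the function $d^\star$ supplied by \cref{thm:pricing-SAT} already satisfies $0 \leq d^\star \leq p$, and this property transfers to $d'^\star$ (using $w \geq 1$ on $f(\U_\text{\sat{}})$ and the size of $M$).
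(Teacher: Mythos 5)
Your proposal is correct and matches the paper's own argument: the theorem is indeed a direct consequence of chaining \cref{lem:meta-thm-if-min} and \cref{lem:meta-thm-only-if-min} to certify a polynomial-time many-one reduction from $\exists\forall$-DNF-\sat{}, with \cref{lem:w-at-least-one} and \cref{lem:F-Sol-equivalence-min} playing exactly the preparatory roles you describe. The paper presents the theorem simply as the culmination of the lemmas in the minimization section, which is what you propose, and your remark about the restricted domains for $d'$ mirrors \cref{cor:d-well-behaved-min}.
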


We can strengthen the result as follows.
\begin{corollary}
\label{cor:d-well-behaved-min}
\cref{thm:meta-reduction-min} remains true, even if any of the additional constraints $0 \leq d$, $d \leq c$, $0 \leq d \leq c$, or $-c \leq d$
is added in \eqref{eq:pricing-pi-min}.
\end{corollary}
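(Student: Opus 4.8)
The plan is to follow the template of \cref{cor:d-well-behaved-max}: we reuse verbatim the reduction from $\exists\forall$-DNF-\sat{} to \textsc{Pricing-$\Pi$} constructed in the proof of \cref{thm:meta-reduction-min} — that is, the instance \eqref{eq:pi-meta-thm-min} — and check that both directions of its correctness proof survive when one of the extra constraints $0 \leq d$, $d \leq c$, $0 \leq d \leq c$, or $-c \leq d$ is imposed on $d'$. The ``only if'' direction is immediate: each such constraint only shrinks the leader's feasible set, so a price function $d'^\star$ that is feasible for the constrained version of \eqref{eq:pi-meta-thm-min} and attains value at least $k^\star$ is a fortiori feasible for the unconstrained \eqref{eq:pi-meta-thm-min} with the same value, and \cref{lem:meta-thm-only-if-min} then applies unchanged to conclude that $\exists A \forall B \varphi(A,B)$ is satisfiable.

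The substance lies in the ``if'' direction, where I would revisit \cref{lem:meta-thm-if-min} and check that the price function $d'^\star$ exhibited there already satisfies the chosen constraint. Recall that $d'^\star(e) = d^\star(f^{-1}(e))$ for $e \in \U'_L = f(\U_L) \subseteq f(\U_\text{\sat{}})$, where $d^\star$ is the function produced by \cref{thm:pricing-SAT}. By \cref{cor:d-well-behaved-SAT}, $d^\star$ satisfies $0 \leq d^\star \leq p$, hence $0 \leq d'^\star$, which settles the constraint $0 \leq d$; and since $c' \geq 0$ — which is exactly what \cref{lem:w-at-least-one} secures — also $-c' \leq d'^\star$. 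The one genuinely new inequality is $d'^\star \leq c'$: for $e \in \U'_L$ we have $c'(e) = Mw(e) - p(f^{-1}(e))$ with $w(e) \geq 1$ by \cref{lem:w-at-least-one}, hence $c'(e) \geq M - p(f^{-1}(e))$, and since $M$ is chosen much larger than $p(f^{-1}(e))$ this yields $c'(e) \geq p(f^{-1}(e)) \geq d^\star(f^{-1}(e)) = d'^\star(e)$. Thus $0 \leq d'^\star \leq c'$, which simultaneously covers $d \leq c$ and $0 \leq d \leq c$. Since the rest of \cref{lem:meta-thm-if-min} is unaffected, the constrained instance of \eqref{eq:pi-meta-thm-min} still has value at least $k^\star$ whenever $\exists A \forall B \varphi(A,B)$ is satisfiable. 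Combining the two directions gives a polynomial-time reduction from the $\Sigma^p_2$-complete problem $\exists\forall$-DNF-\sat{}, proving $\Sigma^p_2$-hardness of the constrained \textsc{Pricing-$\Pi$}.

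I expect the only real obstacle to be precisely the inequality $d'^\star \leq c'$, which is the single point where the minimization case departs from \cref{cor:d-well-behaved-max}: here $c'$ is obtained by \emph{subtracting} $p(f^{-1}(\cdot))$ from $Mw(\cdot)$, so one must make sure that $c'$ does not collapse below $d'^\star$ on the leader's elements — and this is exactly the place where the normalization $w(e) \geq 1$ from \cref{lem:w-at-least-one}, together with the largeness of $M$, does its work. Everything else is a routine transcription of the arguments in \cref{cor:d-well-behaved-SAT,cor:d-well-behaved-max}.
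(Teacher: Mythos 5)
Your proof is correct and follows essentially the same route as the paper: reuse the reduction from \cref{thm:meta-reduction-min}, observe that the $d'^\star$ built in \cref{lem:meta-thm-if-min} already satisfies each added constraint, and note that the converse direction only strengthens the hypothesis of \cref{lem:meta-thm-only-if-min}. You are a bit more explicit than the paper in the one nontrivial check, $d'^\star \leq c'$, spelling out that both $w(e)\ge 1$ and the magnitude of $M$ are needed, whereas the paper cites only $w(e)\ge 1$; this is a helpful clarification, not a deviation.
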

\begin{proof}
The proof is analogous to the one of \cref{cor:d-well-behaved-max}.
First, in \cref{lem:meta-thm-if-min}, if $\exists A \forall B \varphi(A,B)$ is satisfiable, we can assume $0 \leq d^\star \leq p$ by \cref{thm:pricing-SAT}, and this implies $0 \leq d'^\star \leq c'$ since $w(e) \geq 1$ for all $e \in \U'_L$.
On the other hand, if we start \cref{lem:meta-thm-only-if-min} with the stronger assumption of $0 \leq d'^\star \leq c'$, of course the reasoning of the lemma remains true. 
The same is true for the other constraints listed above. Note that, in the minimization setting, also the constraint $-c \leq d$ can be argued to be natural (since the follower's costs never become negative in this case).
\qed
\end{proof}

\section{Conclusion}
\label{sec:conclusion}
In this work, we showed that the knapsack pricing problem is $\Sigma^p_2$-complete, answering an open question of Pferschy, Nicosia, Pacifi, and Schauer \cite{pferschy2021stackelberg} and a conjecture of Böhnlein, Schaudt, and Schauer \cite{bohnlein2023stackelberg}.
Furthermore, we extended this result by showing that, when considering an arbitrary $\NP$-complete problem, only very few assumptions are needed to ensure that the corresponding pricing problem is $\Sigma^p_2$-complete. 
These natural assumptions are met by most classical $\NP$-complete problems.
Our result holds for both natural maximization and minimization variants of the pricing problem, 
and for any choice of constraints $0 \leq d$, $d\leq c$, or $0 \leq d \leq c$ for the leader.

The approach of this paper is limited to underlying problems that are $\NP$-hard. 
It would be very interesting to extend our techniques to underlying problems in $\PTIME$, showing $\NP$-completeness of the pricing problem in a unified way.
However, we do not know if such a result is possible.

\bibliographystyle{splncs04}
\bibliography{literature.bib}

@inproceedings{grune2025completeness,
  title={Completeness in the polynomial hierarchy for many natural problems in bilevel and robust optimization},
  author={Gr{\"u}ne, Christoph and Wulf, Lasse},
  booktitle={International Conference on Integer Programming and Combinatorial Optimization},
  pages={256--269},
  year={2025},
  organization={Springer},
  doi={10.1007/978-3-031-93112-3_19},
}

@article{pferschy2021stackelberg,
  title={On the {Stackelberg} knapsack game},
  author={Pferschy, Ulrich and Nicosia, Gaia and Pacifici, Andrea and Schauer, Joachim},
  journal={European Journal of Operational Research},
  volume={291},
  number={1},
  pages={18--31},
  year={2021},
  publisher={Elsevier},
  doi={10.1016/j.ejor.2020.09.007},
}

@article{bui2025solving,
  title={Solving combinatorial pricing problems using embedded dynamic programming models},
  author={Bui, Quang Minh and Carvalho, Margarida and Neto, Jos{\'e}},
  journal={INFORMS Journal on Computing},
  year={2025},
  publisher={INFORMS},
  doi={10.1287/ijoc.2024.0686},
}

@article{bohnlein2023stackelberg,
  title={Stackelberg packing games},
  author={B{\"o}hnlein, Toni and Schaudt, Oliver and Schauer, Joachim},
  journal={Theoretical Computer Science},
  volume={943},
  pages={16--35},
  year={2023},
  publisher={Elsevier},
  doi={10.1016/j.tcs.2022.12.006},
}

@article{DBLP:journals/anor/LabbeV16,
  author       = {Martine Labb{\'{e}} and
                  Alessia Violin},
  title        = {Bilevel programming and price setting problems},
  journal      = {Ann. Oper. Res.},
  volume       = {240},
  number       = {1},
  pages        = {141--169},
  year         = {2016},
  doi          = {10.1007/S10479-015-2016-0},
}

@article{DBLP:journals/networks/RochSM05,
  author       = {S{\'{e}}bastien Roch and
                  Gilles Savard and
                  Patrice Marcotte},
  title        = {An approximation algorithm for {Stackelberg} network pricing},
  journal      = {Networks},
  volume       = {46},
  number       = {1},
  pages        = {57--67},
  year         = {2005},
  doi          = {10.1002/NET.20074},
}

@article{DBLP:journals/mp/Jeroslow85,
  author       = {Robert G. Jeroslow},
  title        = {The polynomial hierarchy and a simple model for competitive analysis},
  journal      = {Math. Program.},
  volume       = {32},
  number       = {2},
  pages        = {146--164},
  year         = {1985},
  doi          = {10.1007/BF01586088},
}

@book{DBLP:books/daglib/0072413,
  author       = {Christos H. Papadimitriou},
  title        = {Computational complexity},
  publisher    = {Addison-Wesley},
  year         = {1994},
  isbn         = {978-0-201-53082-7},
}

@article{DBLP:journals/4or/Woeginger21,
  author       = {Gerhard J. Woeginger},
  title        = {The trouble with the second quantifier},
  journal      = {4OR},
  volume       = {19},
  number       = {2},
  pages        = {157--181},
  year         = {2021},
  doi          = {10.1007/S10288-021-00477-Y},
}

@book{schrijver1998theory,
  title={Theory of linear and integer programming},
  author={Schrijver, Alexander},
  year={1998},
  publisher={John Wiley \& Sons}
}

@article{labbe1998bilevel,
  title={A bilevel model of taxation and its application to optimal highway pricing},
  author={Labb{\'e}, Martine and Marcotte, Patrice and Savard, Gilles},
  journal={Management Science},
  volume={44},
  number={12-part-1},
  pages={1608--1622},
  year={1998},
  publisher={INFORMS},
  doi={10.1287/mnsc.44.12.1608},
}

@article{joret2011stackelberg,
  title={Stackelberg network pricing is hard to approximate},
  author={Joret, Gwena{\"e}},
  journal={Networks},
  volume={57},
  number={2},
  pages={117--120},
  year={2011},
  publisher={Wiley Online Library},
  doi={10.1002/net.20391},
}

@article{briest2012stackelberg,
  title={Stackelberg network pricing games},
  author={Briest, Patrick and Hoefer, Martin and Krysta, Piotr},
  journal={Algorithmica},
  volume={62},
  number={3},
  pages={733--753},
  year={2012},
  publisher={Springer},
  doi={10.1007/s00453-010-9480-3},
}

@article{cardinal2011stackelberg,
  title={The {Stackelberg} minimum spanning tree game},
  author={Cardinal, Jean and Demaine, Erik D and Fiorini, Samuel and Joret, Gwena{\"e}l and Langerman, Stefan and Newman, Ilan and Weimann, Oren},
  journal={Algorithmica},
  volume={59},
  number={2},
  pages={129--144},
  year={2011},
  publisher={Springer},
  doi={10.1007/s00453-009-9299-y},
}

@article{briest2012b,
  title={On {Stackelberg} pricing with computationally bounded customers},
  author={Briest, Patrick and Guala, Luciano and Hoefer, Martin and Ventre, Carmine},
  journal={Networks},
  volume={60},
  number={1},
  pages={31--44},
  year={2012},
  publisher={Wiley Online Library},
  doi={10.1002/net.20457},
}

@article{bui2024asymmetry,
  title={Asymmetry in the complexity of the multi-commodity network pricing problem},
  author={Bui, Quang Minh and Carvalho, Margarida and Neto, Jos{\'e}},
  journal={Mathematical Programming},
  volume={208},
  number={1},
  pages={425--461},
  year={2024},
  publisher={Springer},
  doi={10.1007/s10107-023-02043-2},
}

@article{grune2025unitcostinterdict,
  title={The Complexity of Blocking All Solutions},
  author={Gr{\"u}ne, Christoph and Wulf, Lasse},
  journal={arXiv preprint},
  year={2025},
  doi={10.48550/arXiv.2502.05348},
}

@article{grune2024recoverable,
  title={On the complexity of recoverable robust optimization in the polynomial hierarchy},
  author={Gr{\"u}ne, Christoph and Wulf, Lasse},
  journal={arXiv preprint},
  year={2024},
  doi={10.48550/arXiv.2411.18590},
}

@article{ReductionsNetwork,
      title={A Compendium of Reductions: reductions.network}, 
      author={Christoph Grüne and Femke Pfaue},
      journal={arXiv preprint},
      year={2025},
      doi={10.48550/arXiv.2511.04308}, 
}

@book{gareyjohnson1979,
  author = {Garey, Michael R. and Johnson, David S.},
  title = {Computers and Intractability},
  subtitle = {A Guide to the Theory of {NP}-Completeness},
  publisher = {W. H. Freeman and Company},
  location = {New York},
  year = 1979,
  series = {A Series of Books in the Mathematical Sciences},
}

@article{hoesel2008overview,
  author = {van Hoesel, Stan},
  title = {An overview of {S}tackelberg pricing in networks},
  journal = {European Journal of Operational Research},
  year = 2008,
  volume = 189,
  number = 3,
  pages = {1393--1402},
  doi = {10.1016/j.ejor.2006.08.064},
}

@article{labbe2021computational,
  author = {Labbé, Martine and Pozo, Miguel A. and Puerto, Justo},
  title = {Computational comparisons of different formulations for the {S}tackelberg minimum spanning tree game},
  journal = {International Transactions in Operational Research},
  year = 2021,
  volume = 28,
  number = 1,
  pages = {48--69},
  doi = {10.1111/itor.12680},
}

@article{bui2022catalog,
  title={A catalog of formulations for the network pricing problem},
  author={Bui, Quang Minh and Gendron, Bernard and Carvalho, Margarida},
  journal={INFORMS Journal on Computing},
  volume={34},
  number={5},
  pages={2658--2674},
  year={2022},
  publisher={INFORMS},
  doi={10.1287/ijoc.2022.1198},
}

\end{document}